\documentclass[conference]{IEEEtran}
\IEEEoverridecommandlockouts
\usepackage{cite}
\usepackage{amsmath,amssymb,amsfonts}
\usepackage{algorithmic}
\usepackage{graphicx}
\usepackage{textcomp}
\usepackage{xcolor}
\usepackage{amsthm}
\def\BibTeX{{\rm B\kern-.05em{\sc i\kern-.025em b}\kern-.08em
    T\kern-.1667em\lower.7ex\hbox{E}\kern-.125emX}}

\usepackage[linesnumbered ,titlenumbered,ruled,noend]{algorithm2e}
\SetKwRepeat{Do}{do}{while}

\newcommand{\ouralgorithm}{FunShare}

\newtheorem{challenge}{Challenge}
\newtheorem{definition}{Definition}
\newtheorem{problem}{Problem}
\newtheorem{theorem}{Theorem}

\usepackage{balance}

\usepackage{subcaption}

\usepackage{url}

\usepackage[skins]{tcolorbox}
\newtcolorbox{takeaway}[2][]{%
  enhanced,
  colback=white,
  colframe=black,
  coltitle=black,
  sharp corners,
  boxrule=0.4pt,
  left=4pt, right=4pt,top=4pt,bottom=4pt, 
  enlarge top by=-0.5\baselineskip, enlarge bottom by=\baselineskip, 
  attach boxed title to top left={yshift=-0.5\baselineskip,xshift=4pt},
  boxed title style={tile,size=minimal,left=2pt,right=2pt,
    colback=white,before upper=\strut},
  title=#2,#1
}

\begin{document}

\title{Process Faster, Pay Less:\\Functional Isolation for Stream Processing}

\author{
  \IEEEauthorblockN{%
    Eleni Zapridou\IEEEauthorrefmark{1},
    Michael Koepf\IEEEauthorrefmark{2}\thanks{The work was done while Koepf, Sioulas, and Mytilinis were at EPFL.},
    Panagiotis Sioulas\IEEEauthorrefmark{3},
    Ioannis Mytilinis\IEEEauthorrefmark{3},
    Anastasia Ailamaki\IEEEauthorrefmark{1}}
  \IEEEauthorblockA{\IEEEauthorrefmark{1}EPFL, Switzerland
  \IEEEauthorrefmark{2}TU Wien, Austria
  \IEEEauthorrefmark{3}Oracle, Switzerland}
  \IEEEauthorblockA{
  \{name.surname\}@epfl.ch,
michael.koepf@student.tuwien.ac.at,
\{name.surname\}@oracle.com}
  
}

\maketitle

\begin{abstract} 
Concurrent workloads often extract insights from high-throughput, real-time data streams. Existing stream processing engines isolate each query's resources, ensuring robust performance but incurring high infrastructure costs. In contrast, sharing work reduces the amount of necessary resources but introduces inter-query interference, leading to performance degradation for some queries.

We introduce FunShare, a stream-processing system that improves resource efficiency without compromising performance by dynamically grouping queries based on their performance characteristics. 
FunShare strategically relaxes query interdependencies and minimizes redundant computation while preserving individual query performance. It achieves this by using an adaptive optimization framework that monitors execution metrics, accurately estimates computation overlaps, and reconfigures execution plans on the fly in response to changes in the underlying data streams. Our evaluation demonstrates that FunShare minimizes resource consumption compared to isolated execution while maintaining or improving throughput for all queries.

\end{abstract}

\begin{IEEEkeywords}
Stream processing, Sharing, Multi-query optimization, Real-time
\end{IEEEkeywords}

\setlength{\textfloatsep}{8pt plus 0pt minus 2pt}

\vspace{-2px}
\section{Introduction}

Organizations increasingly rely on real-time data streams to make rapid decisions in applications ranging from user-experience customization to service monitoring and autonomous driving. Streaming applications have stringent performance requirements while processing data with highly variable distributions and rates. Maintaining high Quality of Service (QoS) requires that streaming queries maintain continuous access to the necessary compute resources. However, as applications grow in complexity and demand deeper insights from data, the increasing volume of queries creates a challenging trade-off between resource consumption and QoS. 

At one end of the design spectrum, existing Stream Processing Engines (SPEs) adopt resource isolation~\cite{storm-parallelism, storm-scheduler, hazelcast-engine, kafka-architecture}. By allocating dedicated resources to each query, they meet QoS requirements and ensure predictable performance. However, this approach comes at the cost of excessive resource consumption. Moreover, resource isolation leads to poor scalability, as the infrastructure must grow proportionally with the number of queries, resulting in very high annual costs \cite{DBLP:journals/pvldb/JindalKRP18}. 

On the other end of the spectrum, work sharing has proven to be an effective way to increase scalability and reduce resource usage in highly concurrent workloads~\cite{roulette, datapath, qpipe}.
For instance, in Microsoft's data centers, 45\% of the jobs share common computations that, if shared, could reduce machine hours by up to 40\%~\cite{DBLP:journals/pvldb/JindalKRP18, DBLP:journals/pvldb/MaiZPXSVCKMKDR18}. 
In work sharing, multiple queries are processed using a single plan that executes common operators only once. While this reduces overall resource demand, it comes at the cost of QoS. Since queries are processed with a common plan, their performance becomes interdependent, which can adversely affect some queries---especially those with strict requirements.
For example, consider two filter--join queries \textit{Q1, Q2} that share the same join but differ in selectivity: \textit{Q1} is highly selective, whereas \textit{Q2} processes all the tuples. In shared execution, both queries observe the same throughput, potentially violating the QoS requirements of the highly selective \textit{Q1}.

While previous research has explored work-sharing techniques for SPEs, it focuses on optimizing aggregate data processing throughput and overlooks the effect of sharing on the QoS of individual queries~\cite{chandrasekaran_telegraphcq_2003, astream, ajoin}. As a result, these techniques are unsuitable for streaming applications with strict QoS requirements.
To enable resource sharing while mitigating the adverse effects of query interdependencies, we adopt the principle of functional isolation~\cite{cidr-oligolithic}: “systems should cross-optimize queries only if individual performance is not compromised.” Prior work\cite{cidr-oligolithic} shows that functional isolation can be achieved in batch analytics by carefully forming sharing groups of queries, reducing CPU utilization without degrading the performance of any individual query.

However, the partitioning-once approach of queries into sharing groups~\cite{cidr-oligolithic} is insufficient for streaming environments due to the dynamic, unpredictable nature of real-time data streams. Both the input rate and distribution of the underlying stream can be unpredictable, causing the efficiency of a particular group's plan---as well as the performance of each query in isolation---to fluctuate over time. Ensuring functional isolation in such an environment requires: i) accurately capturing the trends of the streams and the potential benefits/risks of grouping, ii) continuously adjusting grouping decisions at runtime, and iii) reconfiguring the system with minimal cost. By doing so, we can achieve both resource efficiency and improved QoS simultaneously.

We propose \ouralgorithm{}, a stream processing system that uses functional isolation in order to minimize resource consumption and, at the same time, guarantee high QoS for individual queries. \ouralgorithm{} organizes queries into \emph{dynamic sharing groups}---sets of queries that can safely share work without compromising performance isolation at a given moment---and identifies the minimum amount of compute resources to assign to each group such that each query's QoS requirements are met. \ouralgorithm{} employs a novel mechanism that leverages runtime information to accurately capture i) data shifts and ii) the benefit/risk of grouping. Hence, it can incrementally and safely refine sharing groups, optimizing resource utilization while giving the impression of isolated execution. 

This paper makes the following contributions:
\begin{itemize}
    \item We define the functional isolation principle for streams; Resource consumption is minimized while preserving high QoS for individual queries. This way, we show that sometimes there is free lunch and stream processing becomes more scalable and affordable at no cost.

    \item We propose an adaptive mechanism that continuously (re-)partitions streaming queries into sharing groups, enabling them to execute with significantly fewer resources without performance penalties compared to isolated execution. Our experiments show that \ouralgorithm{} uses $1-10.7\times$ fewer resources than isolated execution while achieving the same or higher throughput for all queries.

    \item The key enabler of our solution is a novel performance estimator that leverages runtime information to accurately predict the performance of hypothetical group formations. In contrast to existing solutions, the estimator scales to the number of groups and is robust to changes in the underlying stream. 
\end{itemize}

\section{The Triangle of Concurrency, Quality-of-Service and Resources}

Business functions or applications are often time-sensitive, requiring rapid insights from real-time events. At the same time, not all queries are of the same importance, nor do they exhibit the same performance characteristics. For example, in a self-driving vehicle manufacturer scenario, we can consider two types of queries: (i) \emph{Alerts}, which filter a single stream of sensor data, are lightweight to compute but safety-critical, and (ii) \emph{A/B testing} for evaluating the driving software/hardware, which joins multiple streams, is heavier to compute but enjoys relaxed responsiveness constraints. As multiple diverse operations may be executed \textbf{concurrently} on a \textbf{shared} infrastructure, interference effects can violate individual query requirements. To quantify this effect, we use the notion of \emph{Quality-of-Service (QoS)}.

\begin{definition}[Quality of Service]
The degree to which individual query performance requirements are satisfied.
\end{definition}

In our example, consider the case where an alert for hitting a pedestrian is not produced on time because resources are busy with A/B testing. This is a low QoS scenario.

In this section, we present the interplay among (i) concurrency, (ii) QoS, and (iii) resources, and explain why existing approaches fall short in dynamic streaming environments.

\subsection{Resource Isolation: The Expensive Standard}
\label{ssec:2_1}

A streaming query corresponds to a dataflow---usually in the form of a directed acyclic graph---where nodes represent operators and edges represent data streams. Each operator has an input and an output queue. If the operator cannot sustain the input rate, it applies backpressure, and tuples accumulate in the output queue of the upstream operator. Backpressure can propagate up to the source; there, tuples are typically stored and processed with a delay. If this effect persists for a prolonged period, latency grows unboundedly.

To cope with such situations and sustain the data velocity, distributed stream processing systems employ a data-parallel approach. Each operator has multiple instances, called \emph{subtasks}, and each subtask processes a partition of the stream. Increasing the number of subtasks results in higher parallelism and higher processing throughput.

Different SPEs use slightly different mechanisms to allocate memory and cores for each subtask. For example, Apache Flink allows users to specify the number of task slots per worker, with each slot receiving an equal share of the allocated memory~\cite{flink-architecture}. Each task slot gets assigned at most one subtask from each operator. Kafka Streams uses stream tasks, the number of which per worker is configurable, as the units of its parallelism model~\cite{kafka-architecture}.

In this work, we focus on compute resources. To abstract away implementation specifics, for the remainder of the paper, we use the following definition of \emph{resources}:

\begin{definition}[Resources]
\label{def:resources}
The total number of subtasks for computing the dataflow of a streaming query.
\end{definition}

The amount of required resources is determined a priori on a per-query basis, in a manner that satisfies QoS requirements while minimizing overprovisioning as much as possible. To deal with unpredictable fluctuations both in the input rate and data distribution, SPEs rely on autoscaling features \cite{flink-autoscaling-reactive-mode, flink-k8s-operator-autoscaler, google-dataflow-autoscaling, hazelcast-autoscaling} to adjust resources to the demand and adaptive partitioning to dynamically balance load and mitigate skew~\cite{dataflow-work-rebalancing, dalton}.

However, in reality, streaming infrastructures do not execute a single query but thousands of concurrent queries~\cite{DBLP:journals/pvldb/MaiZPXSVCKMKDR18, uber-real-time-infra}. Existing SPEs favor QoS and enforce resource isolation. Maintaining each query's QoS requires provisioning sufficient infrastructure to meet the aggregate resource reservations. Hence, resource allocation is managed separately for each query, with queries executing in isolation using dedicated resources. Then, concurrency directly translates into increased monetary and maintenance costs. \emph{Therefore, existing SPEs maintain high-standard QoS, by increasing resources and monetary costs proportionally to query concurrency.}

\subsection{Taming Concurrency using Work Sharing}

\begin{figure}[t]
 \centering
    \includegraphics[width=1\linewidth]{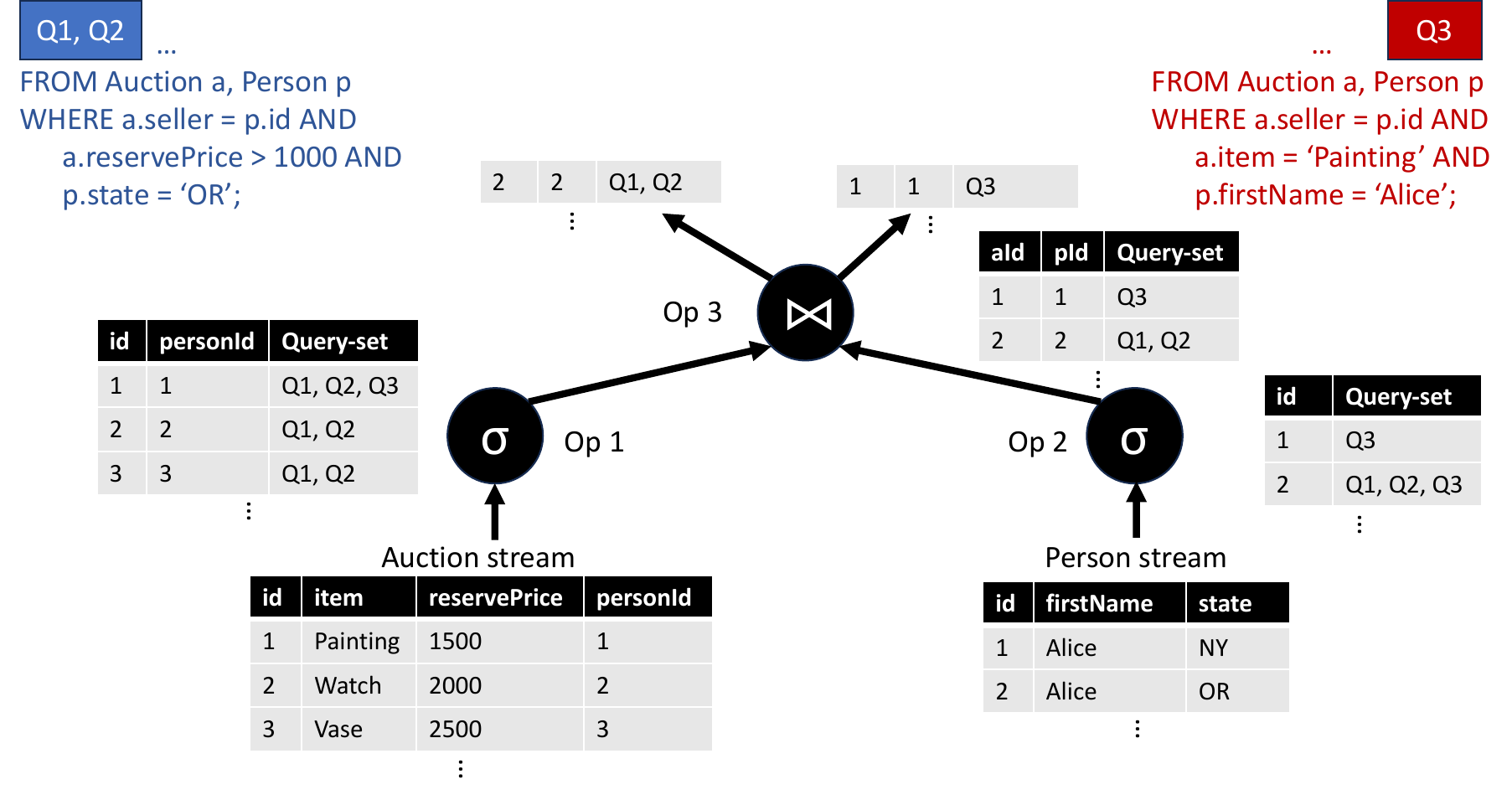}
    \caption{Sharing using a global plan and the Data--Query model}
	\label{fig:data_query_model}
\end{figure}

An alternative to accumulating resource allocations is to reduce redundant processing across queries and, thus, the resource requirements using work sharing. Work-sharing systems process multiple queries at once (Fig.~\ref{fig:data_query_model}) by leveraging (i) a \emph{global plan} and (ii) the \emph{Data--Query model \cite{giannikis_shareddb_2012}}. 

\textbf{Global plan:} The global plan expresses sharing opportunities among different queries. It is a directed acyclic graph (DAG) of operators that process tuples for one or more queries, and multicast their results to one or more downstream operators. 
For example, Fig.~\ref{fig:data_query_model} shows part of the global plan for queries Q1, Q2, and Q3. Shared operators evaluate filters (op. 1 and 2) and execute the join (op. 3) for all queries at once. The join result is routed to different downstream operators for Q1 and Q2, and Q3, respectively. Prior work shows that global plans can share operators such as scan, filter, join, sort, top-N, and group-by \cite{giannikis_shareddb_2012}, as well as window operators \cite{on-the-fly-sharing}.

By processing each operator of the global plan only once, the system shares work across queries, reducing both overall computation and required resources.

\textbf{Data--Query model:} The Data--Query model is a key technique for sharing operators between queries with different selection predicates \cite{giannikis_shareddb_2012, astream, ajoin, roulette}. It annotates each tuple with a \textit{query set} indicating which queries the tuple contributes to. \emph{Shared} operators in the global plan process both the actual tuples and the query sets. The query set tracks membership for intermediate results, controls the routing of tuples, and helps eliminate redundant tuples early.

Fig.~\ref{fig:data_query_model} shows the query sets for intermediate results. In operator 1, tuples are tagged with query sets based on the predicates. The auction with id 1 satisfies all predicates and has all three queries in the query set, whereas the other two fail the predicate \textit{item = "Painting"} and exclude Q3. In the join, query sets are cross-checked; hence, the tuple with auction id = 1 retains only Q3 in its query set, and the tuple with id = 3 is eliminated. Finally, join results are routed to downstream operators based on their query sets. Using the Data-Query model: (i) the global plan can share work on tuples common across some, but not all, queries, and (ii) operators can immediately drop tuples that do not belong to any query.

\subsection{The Challenges of QoS-Friendly Sharing}

Although work sharing reduces overall resource demand, it does not provide guarantees about individual query performance: some queries will experience speedups, but others---particularly lightweight queries that would run quickly in isolation---may be slowed down. In the example from Section \ref{ssec:2_1}, this translates to worse QoS for the critical alerts.

Consider a workload with queries that share exactly the same plans, except for the filters: there are highly selective and non-selective queries. Each query is configured with an amount of resources adequate to sustain the input rate. On the one hand, isolated execution maintains a constant average throughput at the cost of increased resource usage. On the other hand, work sharing will reduce resource consumption, but average throughput can drop significantly, indicating QoS violation for specific queries.

To bridge this performance gap, prior work suggests to selectively share work based on query access patterns: execution should be shared only among queries that process ``many'' common tuples. Selectivity-based approaches (e.g., SWO~\cite{swo}) solve a classification problem: data accesses are classified into high (\textit{H}) and low selectivity (\textit{L}). Then, sharing decisions depend on the class of each input and the selected algorithm for computing the joins. A major drawback of this approach is that it does not account for inter-query correlations. For our running example, consider a batch of $100$ alerting queries that randomly access $1\%$ of the data. SWO will classify them as \textit{L} and share execution. However, each query is expected to process a different $1\%$ region of the data, so the shared operator eventually processes the entire stream.

To remedy this, AJoin~\cite{ajoin} tries to capture the actual work overlap for a set of queries. 
However, the analytical formula used to calculate query overlaps does not scale with the number of groups. 
Additionally, AJoin makes sharing decisions solely based on minimizing total computational cost; it shares execution if running two queries together is cheaper than running them separately. 
This approach can lead to suboptimal sharing choices, such as pairing a highly selective query with a low-selectivity one, provided they have significant filter overlap. 
This may degrade the performance of the more selective query due to increased processing overhead.

\begin{figure}[t]
 \centering
    \includegraphics[width=0.9\linewidth]{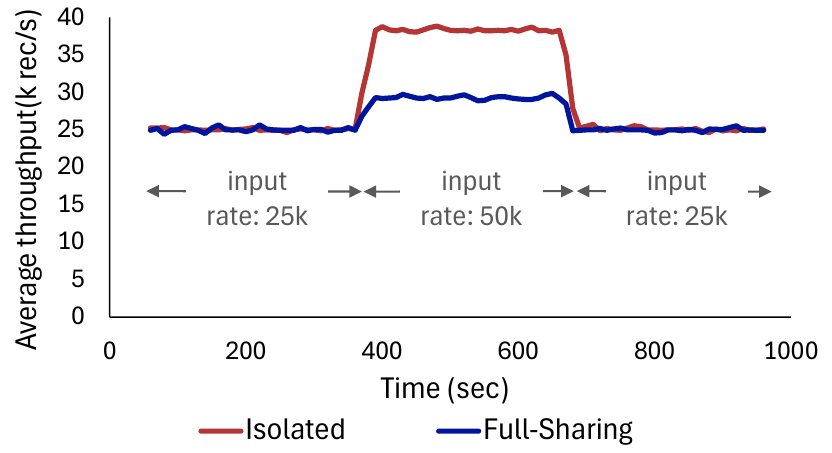}
    \caption{Average throughput during input rate shifts. The workload includes queries with different downstream operations: GROUP BY average and a heavy-weight UDF}
    \label{fig:sharing_downstream_issue}
\end{figure}

\begin{challenge}[Access pattern vs Resources]
Smart sharing decisions that do not compromise QoS depend on data--query--resources correlations, which are hard to capture, especially in a streaming environment where data is unpredictable and distribution may shift at any time.
\end{challenge}

In practice, queries differ in more than just selectivity, and data distributions and input rates fluctuate over time.
When deciding whether to share an operator, we should also consider the downstream operators of each query. For example, consider two A/B testing queries, \textit{Q1} and \textit{Q2} from our self-driving example, that are highly selective and access the same set of tuples. While they share a join, after the join, \textit{Q1} computes a simple \textit{GROUP BY} average, while \textit{Q2} computes a complex, compute-bound UDF. The UDF can apply backpressure on the shared join and impact the throughput and QoS of \textit{Q1}. 

Fig. \ref{fig:sharing_downstream_issue} presents such an example. We use 128 concurrent queries, and resources are carefully assigned to sustain the input rate (this experiment uses the same setup as Section~\ref{sec:eval}:Fig.~\ref{fig:exp:adaptivityIR}). Then, at time 360 seconds, a burst in the input rate in the form of a pulse is observed. As the complex UDF cannot sustain the new rate, in isolated execution, the throughput relative to the input rate is expected to drop by $\frac{\textit{num\_heavy\_queries}}{\textit{num\_total\_queries}}  \times (1 - \frac{\textit{UDF\_throughput}}{\textit{input\_rate}})$. However, in shared execution, lightweight queries will also be affected, so it will drop by $1 - \frac{\textit{UDF\_throughput}}{\textit{input\_rate}}$. Similar effects would have been observed in the case of a distribution shift that caused the downstream operator to become skewed, degrading its processing throughput.

\begin{challenge}[Query Plans vs Data Fluctuations vs Resources]
Sharing decisions should consider global plan complexity and resource requirements rather than operator-local criteria. Moreover, they must be continuous and adaptive, reacting quickly to shifts in input rates or distributions.
\end{challenge}

All in all, workload concurrency in a streaming system exposes different QoS--resources trade-offs that are hard to optimize. We argue that work sharing is the key solution for exploring these trade-offs. However, sharing has its own pitfalls, and prior work has fallen short in efficiently overcoming them. By tackling the challenges posed by the interplay among data--query correlations, dynamic data fluctuations, and resource demand, we advance the work-sharing paradigm for stream processing and deliver high QoS at low cost.

\section{Resource-Efficient Execution in Islands of Queries}
\label{sec:formulation_overview}

\subsection{Functional Isolation for Streams}

To address the challenges that cross-query optimization introduces with respect to QoS, the principle of \emph{functional isolation}~\cite{cidr-oligolithic} has been proposed: any cross-query optimization (e.g., work sharing) should never compromise individual performance. GroupShare~\cite{cidr-oligolithic} implements functional isolation for batch analytics, using runtime performance metrics and an iterative top-down partitioning algorithm. It initially places all queries in a single group and continuously restructures them into smaller groups until it converges to a grouping that respects individual QoS requirements.

However, stream processing marks a significant departure from warehousing data and batch analytics. Both data distribution (hence, query overlaps) and velocity may present high variability. This complicates sharing decisions. A distribution shift may increase the correlation between two queries and increase the sharing benefit, or decrease it and make sharing detrimental. Similarly, changes in velocity may require reconsidering grouping decisions. GroupShare's one-way, divisive partitioning fails in this context; it converges to a static state, and thus queries remain grouped only if sharing is beneficial throughout the entire execution.
Clearly, with ever-changing data trends, grouping decisions need to be continuously reassessed.

We propose \ouralgorithm{}, which advances the work-sharing paradigm by achieving functional isolation for queries on streaming data with varying distributions and velocities. Unlike one-way descent, \ouralgorithm{} continuously re-evaluates prior grouping decisions, rapidly adapting group formations in response to distribution and velocity shifts throughout the stream's lifetime.

\ouralgorithm{} supports queries with arbitrary operators, but sharing candidates are only joins with varying selection predicates. We restrict to joins as they are highly sensitive to query overlaps and provide a rigorous testbed for grouping logic.

We concretely define \ouralgorithm{}'s goal: 

\textbf{Goal:} \textit{Minimize a target workload's resource usage while providing equal or better QoS for each query.}

Let us formally define the problem. Consider queries $q_1, q_2, \dots, q_n$, which run in isolation, consume a stream of rate $D(t)$, and each query's $q_i$ processing throughput is $T_i(t)$. In this work, we assume that QoS is expressed in terms of throughput. To achieve $T_i(t)$, each query $q_i$ requires resources $Resources(q_i)$. Both the input rate and the throughput are functions of time. For resources, we assume careful, a priori provisioning for each query. If $T_i(t) \geq D(t)$, a query can sustain the input rate, but if $T_i(t) < D(t)$, backpressure will be applied and tuples will be accumulated at the source. Then, functional isolation can be re-defined for stream processing as:

\begin{definition}[Functional Isolation for Streams]
\label{def:func_isolation}
Continuously (re-)partition a set of queries \( \{q_1, \dots, q_n\} \) into non-overlapping groups \( g_1, \dots, g_m \) (with
$\bigcup_{j=1}^{m} g_j = \{q_1, \dots, q_n\}$) and execute the shared plan for each group \( g_j \) using a shared resource allocation \( Resources(g_j) \) that achieves a processing rate \( T_{g_j}(t) \) satisfying $T_{g_j}(t) \geq T_{q_{j,k}}(t), \quad \forall\, q_{j,k} \in g_j$.
\end{definition}

Functional isolation formally defines a problem and the framework for the solution, but does not specify any requirements with respect to resource consumption. As our goal is to minimize resources, we pose and solve the following problem:

\begin{problem}[Resource-Efficient QoS Guarantees]
\label{prob:funcisol}
Continuously (re-)partition a set of queries \( \{q_1, \dots, q_n\} \) into non-overlapping groups \( g_1, \dots, g_m \) (with \(\bigcup_{j=1}^{m} g_j = \{q_1, \dots, q_n\}\)) and assign to each group a resource allocation \( Resources(g_j) \) such that the total amount of resources is minimized:
\[
\min \quad \sum_{j=1}^{m} \text{Resources}(g_j)
\]
subject to the constraints:

\noindent(1) For each group \( g_j \), the effective processing rate satisfies
    \[
    T_{g_j}(t) \geq T_{q_{j,k}}(t), \quad \forall\, q_{j,k} \in g_j.
    \]
    
\noindent(2) The shared resource allocation does not exceed the sum of the isolated resource requirements:
    \[
    Resources(g_j) \leq \sum_{q_{j,k} \in g_j} Resources(q_{j,k}), \forall\, j \in \{1, \dots, m\}.
    \]
\end{problem}

The first constraint enforces functional isolation, while the second establishes an upper bound on resource allocation. Shared execution must not consume more resources than isolated execution, as this contradicts the goals of sharing. Furthermore, while not explicitly formalized, an implicit requirement is that any runtime reconfiguration should incur minimal overhead and not affect QoS. By solving Problem \ref{prob:funcisol}, we make stream processing more resource-efficient while providing the same individual QoS guarantees.

\subsection{System Overview}

\begin{figure}
    \centering
    \includegraphics[width=1\linewidth]{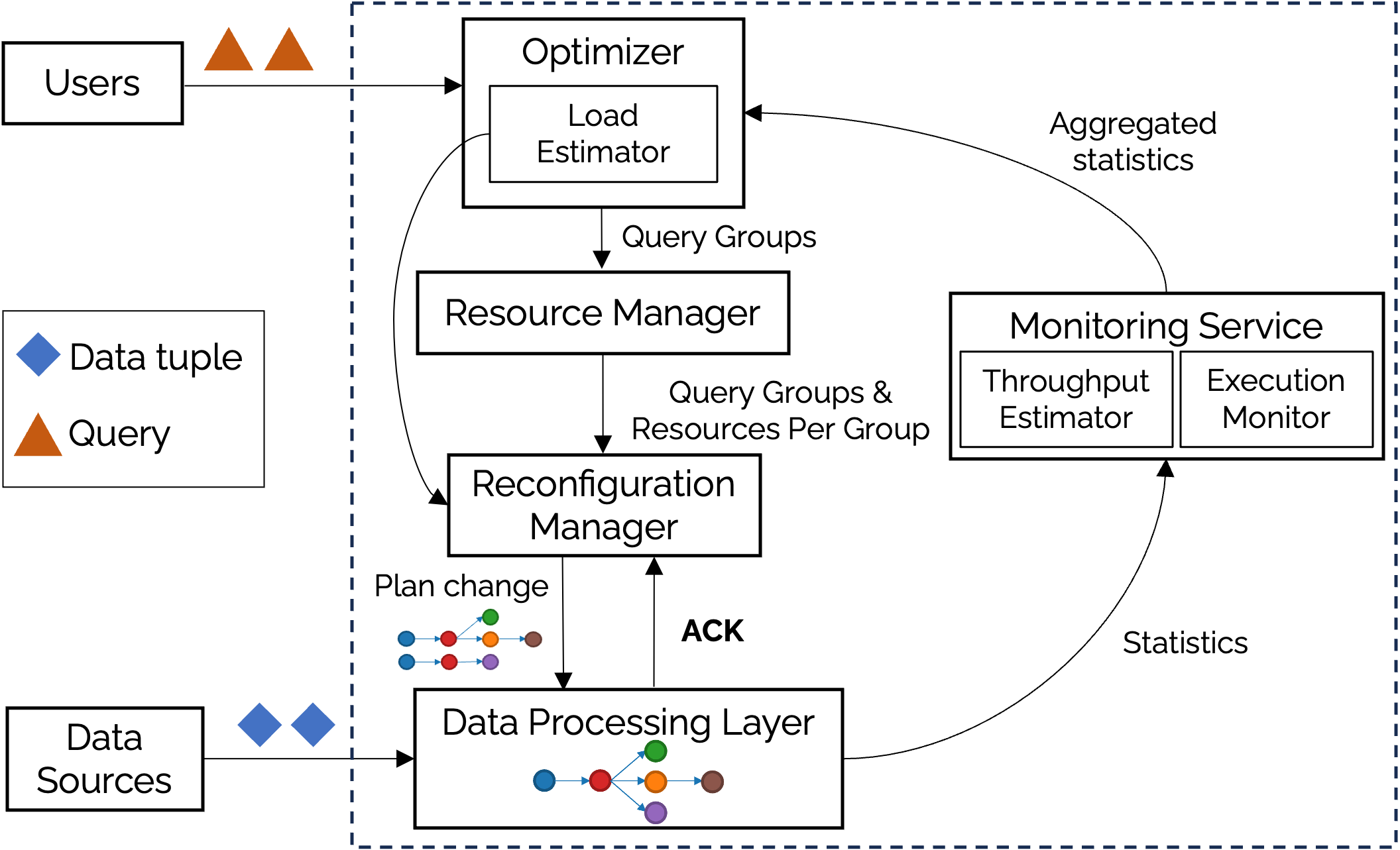}
    \caption{\ouralgorithm{} architecture}
    \label{fig:architecture}
\end{figure}

Fig.~\ref{fig:architecture} illustrates the architecture of \ouralgorithm{}, which incorporates real-time load estimation and monitoring to continuously adapt its execution strategy to dynamic workloads, optimizing resource utilization while ensuring individual query performance remains stable or improves.
The system operates through a continuous feedback loop that monitors query execution, assesses query overlaps, and refines group assignments in real time.
The \emph{Optimizer} receives queries with their resource specifications and analyzes runtime statistics to form and update their partitioning into sharing groups, ensuring functional isolation. 
Whenever the sharing groups change, the \emph{Optimizer} sends them to the \emph{Resource Manager}, the component responsible for deciding each group's resource assignment.
The \emph{Reconfiguration Manager} orchestrates plan adjustments on the fly without pausing data processing.
The data processing layer executes each group's global plan, employing the data--query model to share computation between queries with varying selection predicates. 
The \emph{Monitoring Service} collects statistics about the execution and workload at runtime and aggregates them, providing the \emph{Optimizer} with up-to-date insights to inform sharing decisions. Finally, the \emph{Optimizer} has a \emph{Load Estimator} that employs a novel mechanism to estimate at runtime the computation load of groups and the overlap between them.

\section{Adaptive Sharing-Group Partitioning}
\label{sec:optimizer}

To achieve the objective of Problem~\ref{prob:funcisol}---reduce resource consumption while ensuring that individual query performance remains uncompromised---\ouralgorithm{} dynamically adjusts group assignments at runtime using two mechanisms. First, group merging periodically evaluates whether, under the current data distribution, combining two groups can improve performance or reduce resource usage without diminishing any query's throughput. 
Second, group splitting is triggered when a query experiences performance degradation within its group due to a workload shift. Then, smaller groups are created to promptly restore performance isolation.
Overall, splitting and merging complement each other to dynamically maintain an effective grouping: splitting quickly isolates queries that perform worse within a group than in isolation, while merging updates groups when sharing does not incur penalties.

In this section, we first present the group merging and splitting mechanisms (Sections~\ref{sec:merge} and~\ref{sec:split}). Then, we describe the function of the \emph{Resource Manager}, which decides on the per-group resource allocations (Section~\ref{sec:resource-allocation}). Finally, we present the mechanisms for estimating common load between groups and query performance in isolation, and for collecting execution metrics (Section~\ref{sec:monitoring}).

\subsection{Group Merging Mechanism}
\label{sec:merge}
The merging mechanism periodically combines groups to eliminate redundant processing and reduce overall resource consumption. 
Merging decisions must balance the benefits of shared computation against the risk of degrading individual query throughput.

The key intuition behind merging two groups is to ensure that the additional computation load introduced by the merge can be handled without penalizing either group. Specifically, a group $g_j$ should only be merged with a group $g_i$ if the extra work imposed by the merge (computation $g_j$ does not need) can be offset by available resources. Such resources originate from two sources: (i) resources currently allocated to $g_i$, and (ii) resources allocated to $g_j$ that are idle. This is consistent with the second constraint of Problem~\ref{prob:funcisol}. Ensuring this condition for both groups is critical---otherwise, the merge could degrade throughput for one of the queries.

To quantify the impact of a merge decision, we define the cost metric $GroupingCost(g_i, g_j)$ that captures the additional processing load imposed on group $g_j$ by merging it with group $g_i$, relative to the available resources to process it. 

\begin{equation} 
\label{eq:grouping_cost} 
\mathit{GroupingCost}(g_i, g_j) = \frac{ \frac{\mathrm{Load}(g_i \cup g_j) - \mathrm{Load}(g_j)}{\mathrm{Load}(g_i \cup g_j)}} { \frac{Resources(g_i) + \mathit{IdleResources}(g_j)}{Resources(g_i) + Resources(g_j)}} 
\end{equation}

\noindent where $Load(g)$ represents the processing load of group $g$, 
and $Load(g_i \cup g_j)$ the total load of the merged group $g_i \cup g_j$. $Resources(g)$ expresses the resources allocated to group $g$, and $IdleResources(g)$ the resources that remain unused in $g$. 

The numerator captures the increase in load for group $g_j$ when merging, relative to the combined load of the merged group. 
The denominator measures the proportion of total resources available to process the additional load.
As a result, the metric provides a normalized measure of the additional processing burden relative to the overall resource capacity. 
A lower $GroupingCost(g_i,g_j)$ indicates that the overhead of merging will likely be offset by the available resources, while a higher cost suggests that the merge could decrease the throughput of some queries in group $g_j$. 
Crucially, all components used to compute $GroupingCost$, aside from the resources allocated to each query, depend on the current workload.
Furthermore, accurately estimating $Load(g_i \cup g_j)$ is challenging because it requires relating computations performed by two groups currently executed independently.
We detail how \ouralgorithm{} estimates these quantities accurately yet efficiently in Section~\ref{sec:monitoring}.

\begin{algorithm}
\SetAlgoLined
\SetKwInOut{Input}{Input}
\SetKwInOut{Local}{Local}
\Local{Set of groups \(G\)}
\Local{Boolean \(merging\_possible \gets true\)}

\While{\(merging\_possible\)}{
    \(min\_cost \gets \infty\)\;
    \(groups\_to\_merge \gets None\)\;
    \(merging\_possible \gets false\)\;
    
    \ForEach{pair \((g_i, g_j)\) in \(G\) with a common operator}{
        \If{No downstream backpressure propagation is expected from merging \(g_i\) and \(g_j\)}{
            \label{alg:line:merge-backpressure-check}
            
            \(
            \begin{aligned}
              cost \gets {} & \max\Bigl(\operatorname{GroupingCost}(g_i, g_j),\\
                                & \operatorname{GroupingCost}(g_j, g_i)\Bigr)
            \end{aligned}
            \)\;
            
            \label{alg:line:merge-cost-calc}
            \If{\(cost < min\_cost\) and \(cost < MERGE\_THRESHOLD\)}{ \label{alg:line:merge-threshold}
                \(min\_cost \gets cost\)\;
                \(groups\_to\_merge \gets (g_i, g_j)\)\;
                \(merging\_possible \gets true\)\;
            }
        }
    }
    
    \If{\(groups\_to\_merge \neq None\)}{
        \(G.update\_groups(groups\_to\_merge)\)\;
    }
}
merge\_groups(G)
\caption{Group Merging (Minimizing resources)}
\label{alg:merge}
\end{algorithm}

Algorithm~\ref{alg:merge} outlines the merging procedure.
In each optimization cycle, the system examines all pairs of groups in the current set $G$ that share operations. 
The process begins by assessing the risk of backpressure (line~\ref{alg:line:merge-backpressure-check}). 
Concretely, if the candidate shared operators within the lower-throughput group are already backpressured by their downstream subplan, the merge is skipped to prevent the slow group from throttling the other one. 
Note that checking only pairs of groups may not be optimal, but \ouralgorithm{} does not aim for optimality but rather saving resources while respecting functional isolation and not adding overheads itself.
Next, the algorithm computes the overhead (or cost) that each group would incur by absorbing the other group’s workload (line~\ref{alg:line:merge-cost-calc}). 
As $GroupingCost(g_j,g_i)$ is asymmetric, we estimate the merge cost using $max(GroupingCost(g_j,g_i), GroupingCost(g_i,g_j))$, thereby ensuring that QoS requirements are met for both groups.
A merge is allowed only if this cost is below a preset threshold (line~\ref{alg:line:merge-threshold}). 
Setting the merge threshold to 1 would be a natural choice, indicating that the resource increase should exceed the increase in computation cost. However, the threshold can be set to a lower value to account for performance trends scaling sub-linearly to resource increase. Additionally, lowering the merge threshold results in more conservative merging decisions. 
At every iteration, the merge with the lowest cost is executed, and the process repeats until no further merges are possible. Once the merge step concludes, the \emph{Optimizer} forwards the finalized groups to the \emph{Resource Manager}. Subsequently, the data processing layer will make all necessary query plan changes in a single step.

\subsection{Group Splitting Mechanism}
\label{sec:split}
When shifts in data distribution cause the performance of a sharing group to degrade, group splitting becomes necessary. To maintain functional isolation, the system must rapidly detect these shifts and adjust group assignments accordingly.

The splitting mechanism is triggered when at least one query in a group is predicted to perform worse than if it was running in isolation. (We detail how we detect penalized queries in Section~\ref{sec:resource-allocation}.) The primary objective of splitting is to isolate these queries quickly, preventing QoS degradation. Splitting also serves as a correction mechanism for merging decisions that led to groups that do not respect functional isolation.

\begin{algorithm}
\SetAlgoLined
\SetKwInOut{Input}{Input}
\SetKwInOut{Local}{Local}
\Input{Group $g$, Set of penalized queries $PQ$}
\caption{Group Splitting (Preserving functional isolation)}
\label{alg:split}
\uIf{shared subplan of group $g$ is backpressured} { \label{alg:line:split-backpressure-start}
    split the queries causing backpressure\;
    return\; \label{alg:line:split-backpressure-end}
}
\uElseIf {resource increase for query group possible} {
    request resource increase from resource manager\;
}
\Else{
    place queries in $PQ$ into singleton groups\; \label{alg:line:split-penalized}
}

\end{algorithm}

Algorithm~\ref{alg:split} outlines the procedure, which includes two main steps. In the first step, termed \textbf{backpressure response}, when a shared subplan experiences backpressure, the \emph{Optimizer} identifies the queries causing the slowdown and immediately isolates them (lines~\ref{alg:line:split-backpressure-start}--\ref{alg:line:split-backpressure-end}). Backpressure detection happens based on statistics provided by the \emph{Monitoring Service} (Section~\ref{sec:monitoring}). This ensures that slow queries do not throttle the entire group. In the second step, referred to as \textbf{resource check and isolation}, in the absence of backpressure, the \emph{Optimizer} first checks whether additional resources can be allocated to the group. If additional resources are available, they are requested from the \emph{Resource Manager}. If not, queries expected to be penalized are removed from the group and placed into singleton groups as a quick response that guarantees they meet their QoS. Attempting to add resources before isolating queries preserves the benefits of shared processing while ensuring performance isolation when contention increases. After splitting, the subsequent merging phase may combine singleton groups with others if this does not compromise QoS, thereby enhancing overall resource efficiency. 
We have theoretically proven that \ouralgorithm{} converges (Appendix~\ref{proofs}): assuming no distribution changes, the merge step produces groups that maintain functional isolation. The corrective splitting step is required only when distribution shifts, and even then, functional isolation can be restored in a bounded number of splits.

\subsection{Allocating Resources to Sharing Groups} \label{sec:resource-allocation}

After forming query groups, the \emph{Optimizer} forwards them to the \emph{Resource Manager}, which determines the appropriate resource allocation per group to ensure resource efficiency without penalizing individual queries.

\paragraph{Resource Provisioning during Merging} 
The \emph{Resource Manager} leverages the $GroupingCost$ metric (Equation~\ref{eq:grouping_cost}) to guide its allocation decisions. When merging two or more groups, \(M = \{g_1, \dots, g_m\}\), the manager computes the minimum amount of resources that ensures the cost for all individual groups remains below the merge threshold. Specifically, with $M_{-i} = M \setminus \{g_i\}$, merge threshold as $MT$ and
\[
\begin{aligned}
  &Resources^*(M_{-i}) = \operatorname*{min} \quad Resources(M_{-i}) \\
  &\text{s.t.}\quad  GroupingCost(M_{-i}, g_i; Resources(M_{-i})) < MT
\end{aligned}
\]

\noindent
it solves: $i^* = \operatorname*{argmax}_{i \in \{1,\dots,m\}} Resources^*(M_{-i})$

For the full group, the manager provisions the necessary resources: $Resources(i^*) + Resources^*(M_{-i^*})$

This approach guarantees that no group in the merged set experiences a cost exceeding the selected threshold.

\paragraph{Resource Adjustment upon Query Penalty} 
If a query within a sharing group exhibits throughput degradation relative to its isolated performance, the \emph{Optimizer} can request an increase in the group’s resources.
The provisioning is raised up to the sum of the individual resources that would be allocated if each query was executed in isolation. 
If the group is already operating at its maximum resource allocation, the \emph{Optimizer} proceeds to reassign certain queries into singleton groups.
Consequently, the \emph{Resource Manager} reduces the allocation for the original group so that the split queries can run independently, restoring their performance.

\subsection{Estimating Grouping Cost and Isolated Throughput}
\label{sec:monitoring}

To decide on group merging and splitting, the \emph{Optimizer} relies on two components: the \emph{Load Estimator} and the \emph{Monitoring Service}, which includes the \emph{Throughput Estimator} and the \emph{Execution Monitor} (Fig.~\ref{fig:architecture}).

\paragraph{Load Estimation}
Estimating the computational load for sharing groups accurately is essential to compute the grouping cost (Equation~\ref{eq:grouping_cost}). We need to estimate the individual load of a group, $\mathrm{Load}(g)$, and 
the total load after merging, $\mathrm{Load}(g_i \cup g_j)$.

We introduce a mechanism for efficiently and accurately estimating query and group loads, using sampling, the Data Query Model, and lightweight reconfiguration.
This mechanism enables us to estimate the combined load of groups currently being executed separately, providing valuable insights for evaluating future groupings.
The \emph{Load Estimator} triggers this process before each merging phase.
For each subpipeline that appears in more than one sharing group (a sharing candidate), it selects one group to collect workload statistics. To minimize the overhead of this process, the \emph{Load Estimator} heuristically chooses the group with the highest selectivity.
For a small data sample, the selected group processes all tuples that belong to at least one query, identified through their query set, and calculates workload statistics. To express processing per tuple, \ouralgorithm{} employs cost models. Although we do not advocate for a specific model, our implementation uses an analytical cost model~\cite{DBLP:conf/icde/KangNV03, engine-based-cost-model} as we have found it accurate for the examined workloads, and at the same time, it is easily explainable and cheap to compute.

\begin{figure}
    \centering
    \includegraphics[width=0.8\linewidth]{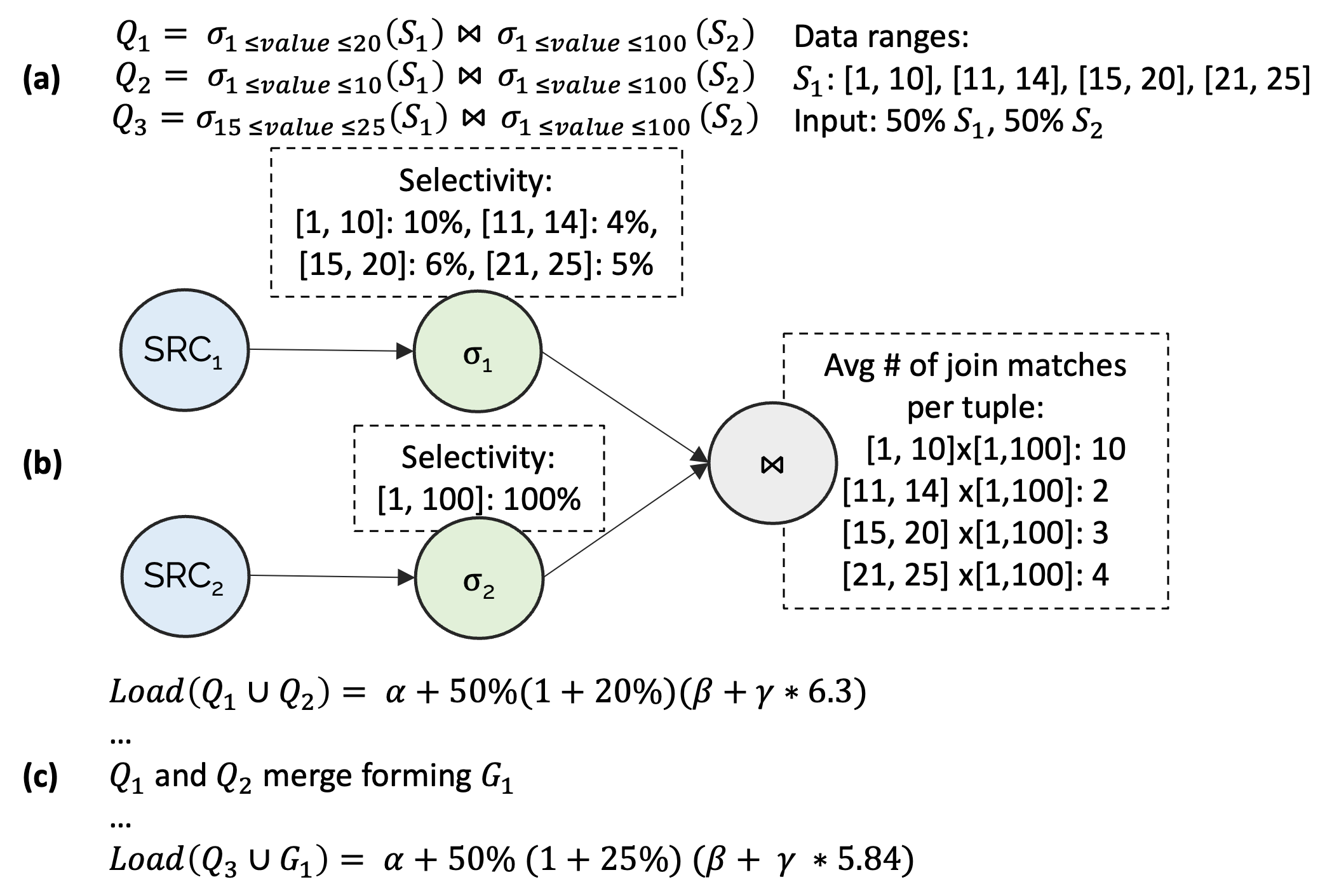}
    \caption{Example of the load estimation mechanism}
    \label{fig:merge-monitoring}
\end{figure}

\begin{figure*} [t]
    \centering
    \includegraphics[width=0.7\linewidth]{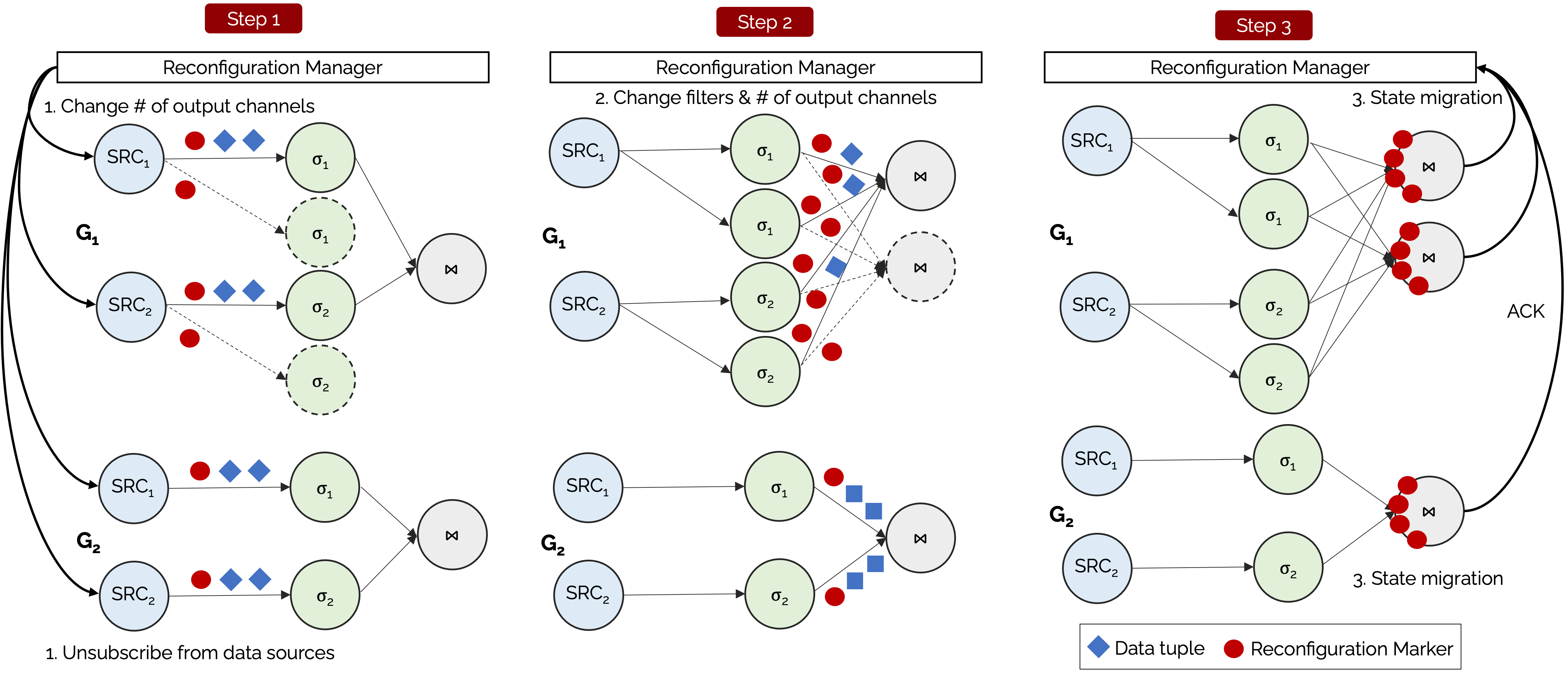}
    \caption{Reconfiguration steps for merging two queries}
    \label{fig:reconfiguration}
    \vspace{-0.2cm}
\end{figure*}

Fig.~\ref{fig:merge-monitoring} illustrates the mechanism for three queries that share a subpipeline consisting of a filtering stage followed by a join operation. The queries in the figure use range filters for one of the input streams; however, the approach generalizes to other filter types.
We use the following cost model for these queries:
$\alpha + \text{selectivity}\cdot(\beta + \gamma \cdot \text{joinMatches})$,
where $\alpha$ (the cost for the source and filter operators), $\beta$ (input cost of join phase), $\gamma$ (output cost of join phase) are fixed parameters.
Hence, the data distribution statistics we must track are the selectivity and the average number of join matches per tuple.

Queries $Q_1$, $Q_2$ and $Q_3$ in Fig.~\ref{fig:merge-monitoring} are executed in isolation.
The \emph{Load Estimator} identifies all non-overlapping filter ranges and selects a responsible group for collecting statistics. 
It then informs the \emph{Reconfiguration Manager}, which adjusts the group’s configuration by (i) enabling the tracking of data distribution statistics within the relevant filter and join tasks and (ii) configuring filter tasks to forward all tuples within the monitored ranges to the join operator. (Section~\ref{sec:reconfiguration} describes the reconfiguration mechanism in detail.)
For a defined duration (specified in event time), filter tasks track selectivity, and join operators measure the average number of join matches produced per tuple (Fig.~\ref{fig:merge-monitoring}(b)). 
For simplicity, the example assumes statistics are collected over the entire processing window, eliminating the need for upsampling for join matches. 
Tasks report the collected statistics to the \emph{Monitoring Service}, which forwards them to the \emph{Optimizer}. 
Then, tasks resume normal operation without further monitoring, and the \emph{Optimizer} uses the gathered statistics in the merging phase (Alg.~\ref{alg:merge}). 
Fig.~\ref{fig:merge-monitoring}(c) also shows how the \emph{Load Estimator} estimates 
the combined load of $Q_1 \cup Q_2$. Assuming the \emph{Optimizer} decided to merge $Q_1$ and $Q_2$ into $G_1$, it also presents the calculation of the combined load of $Q_3 \cup G_1$.

The proposed load estimation mechanism enables accurate estimation of any individual group loads and any merged group loads by processing a small sample of tuples.
Thereby, it allows \ouralgorithm{}'s \emph{Optimizer} to perform any number of group merges in a single iteration and minimizes reconfiguration overheads.
Importantly, the correctness of the group responsible for load estimation is not compromised, as the Data Query Model ensures that each query within the group continues to receive only the relevant output tuples. 

\paragraph{Estimating throughput in isolation} 
Given that the load-estimation cost models express the per-tuple processing time, inverting them provides an estimate of the throughput in isolation, which is critical for determining when to trigger the splitting mechanism. 
Each group containing more than one query continuously monitors a fraction of the data to collect the necessary statistics for applying the cost model to its queries. 
For throughput estimation, we need to track statistics only at the query level, and not for every filter range, as we do for load estimation. 
Tasks periodically send the statistics to the \emph{Throughput Estimator}, which estimates per-query throughput and forwards it to the \emph{Optimizer}, which in turn uses it to determine if any group requires splitting.

\paragraph{Monitoring execution metrics}
The \emph{Execution Monitor} tracks several key metrics of the processing engine: 
(i) idle CPU time per task, which is used in the grouping cost equation (Equation~\ref{eq:grouping_cost}), 
(ii) backpressure statistics, which inform the merging and splitting phases (Algorithms~\ref{alg:merge} and \ref{alg:split}), and 
(iii) group throughput, which is used to assess the necessity for group splitting (Algorithm~\ref{alg:split}).
These metrics are collected via fast control messages~\cite{fries}, which allow the controller to communicate with tasks without being blocked by data messages.
Notably, the \emph{Execution Monitor} does not introduce considerable overheads as most of the metrics of interest are already tracked in SPEs for system monitoring~\cite{flink-metrics, kstream-metrics, storm-metrics}, scheduling~\cite{flink-resource-allocation-strategy, storm-scheduler}, and autoscaling~\cite{heinze_autoscaling_2014, google-dataflow-autoscaling}.

\section{On-the-fly Reconfiguration of Query Groups }
\label{sec:reconfiguration}

\ouralgorithm{} reconfigures query plans on the fly. Concretely, given the query groups and the number of parallel subtasks per group determined by the \emph{Optimizer} and the \emph{Resource Manager}, respectively, the \emph{Reconfiguration Manager} orchestrates the necessary changes to the execution plans. There are 4 types of reconfiguration operations in \ouralgorithm{}: group merging, group splitting, changing a group's parallelism, and enabling statistics monitoring for the load estimation. Following prior work~\cite{fries, DBLP:journals/pvldb/MaiZPXSVCKMKDR18},
we adopt epoch-based reconfiguration. Concretely, we rely on the mechanism proposed by~\cite{fries}, which uses fast control messages to optimize epoch-based reconfiguration while still guaranteeing exactly-once processing.

A data stream is divided into consecutive sets of tuples forming an \emph{epoch}, a time period during which the configuration is the same. 
When a reconfiguration request is issued, the \emph{Reconfiguration Manager} injects a  \emph{marker} into each source subtask. When a subtask receives a marker from an input channel, it performs epoch alignment, waiting until it receives the marker from all input channels. Then, it applies the necessary changes to transition to the new configuration before forwarding the marker downstream. Reconfiguration is complete once the markers reach the last subtasks.

Fig.~\ref{fig:reconfiguration} illustrates the reconfiguration process when merging two groups. Initially, these groups independently read, filter, and join two data streams.
Before merging, each group operates with a parallelism of 1 for all operators. After merging, the new group has parallelism 2.
During step 1, the \emph{Reconfiguration Manager} injects markers into all source subtasks. For source subtasks reading data from the same stream, markers are inserted at the same event time. Source subtasks of group G1 change the number of their output channels from 1 to 2 and forward the marker to both the existing and the new filter subtasks. Meanwhile, sources of Q2 unsubscribe from the data sources and propagate markers downstream.
In the second step, the filter subtasks of G1 adjust their filtering criteria to reflect the union of the previous filters from G1 and G2. Then, they change the number of their output channels and forward the markers. The filter subtasks of G2 simply forward the markers. In step 3, upon receiving markers from all upstream channels, the join subtasks initiate state migration to redistribute state partitions, accommodating the new join subtask. Since filters may have differed between G1 and G2, the join subtask of G2 transfers any state that G1 lacked. Once migration is completed, the join subtasks notify the \emph{Reconfiguration Manager}. While source parallelism remains unchanged here, our technique also supports reconfiguring it.  

Reconfiguration follows a similar process for group splitting and adjusting a group's parallelism. Upon receiving the markers, operators adjust the parallelism level, while stateful operators handle state migration as needed. During group splitting, the source subtasks of the newly formed groups register to the respective data sources. 
Finally, for the lightweight reconfiguration of the load estimation process, subtasks receive the filter ranges that should be monitored and adjust their operation so that they process all tuples in these ranges and not solely the tuples belonging 
to the group's queries.

\section{Evaluation}
\label{sec:eval}

 We experimentally evaluate the scalability and adaptivity of \ouralgorithm{} and its impact on per-query performance.

 \textbf{Implementation details} We implement \ouralgorithm{} on the widely used Apache Flink~\cite{flink-original-paper} stream processing engine (v1.13). The \emph{Optimizer} and \emph{Resource Manager} are integrated into Flink's JobMaster. The \emph{Resource Manager} determines the number of subtasks per group (Definition~\ref{def:resources}), while Flink handles subtask placement across TaskManagers. We additionally implement the on-the-fly reconfiguration mechanism described in Section~\ref{sec:reconfiguration}.
 All baselines are implemented in Flink as well. 

\textbf{Platform} We use 5 two-socket Intel Xeon E5-2660 CPUs (2.20 GHz) and 3 Intel Xeon CPU E5-2640 v2 CPUs (2.00GHz) servers, each with ($8\times2$) threads per socket and $128$ GB of DRAM for the TaskManagers. Additionally, we use 2 two-socket Intel Xeon E5-2680 v4 CPUs (2.40 GHz) servers with ($2\times14$) threads per socket for Flink's JobManager and data generation.

\textbf{Methodology} 
Following prior work on benchmarking streaming systems~\cite{DBLP:conf/icde/KarimovRKSHM18, agnihotri2025pdspbenchbenchmarkingparalleldistributed}, we run data generation on a separate machine to eliminate interference between data production and processing. We collect measurements after the system has warmed up and reached a steady state.

\textbf{Algorithms} We compare \ouralgorithm{} against four baselines:
\begin{enumerate}
    \item \emph{Isolated execution:} Each query runs independently with a separate execution plan. 
    \item \emph{Full-Sharing:} All queries are executed with a single shared query plan.
    \item \emph{Overlap-Sharing:} Queries share execution only when the cost of executing them together is less than executing them separately. This is the optimization algorithm used by AJoin~\cite{ajoin}.
    \item \emph{Selectivity-Sharing:} Queries are grouped based on their selectivity class (high or low selectivity). The selectivity threshold is determined through system micro-benchmarking. This is the optimization algorithm proposed by SWO~\cite{swo}.
\end{enumerate}

For \ouralgorithm{}, the \emph{Optimizer} performs a merging step every minute (see Section~\ref{sec:eval-tuning}). Groups monitor statistics for throughput estimation using a sampling rate of 1\% and report them to the \emph{Monitoring Service} every 10 seconds. We set the duration for monitoring statistics for load estimation such that each task collects statistics for 1000 tuples. Finally, we set the merge threshold (Algorithm~\ref{alg:merge}) as a function of resources. 

\begin{figure*}[t]
\centering
\begin{subfigure}[t]{0.32\linewidth}
\includegraphics[width=1\textwidth, keepaspectratio]{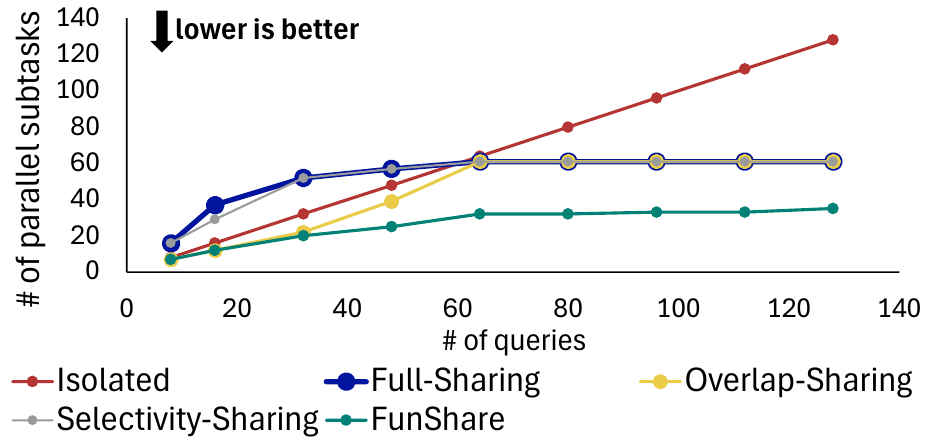}
\caption{W1 - Queries with selectivity 10\%}
\label{fig:exp:resources-joinSameSel10} 
\end{subfigure}
\hfill
\begin{subfigure}[t]{0.32\linewidth}
\includegraphics[width=1\textwidth, keepaspectratio]{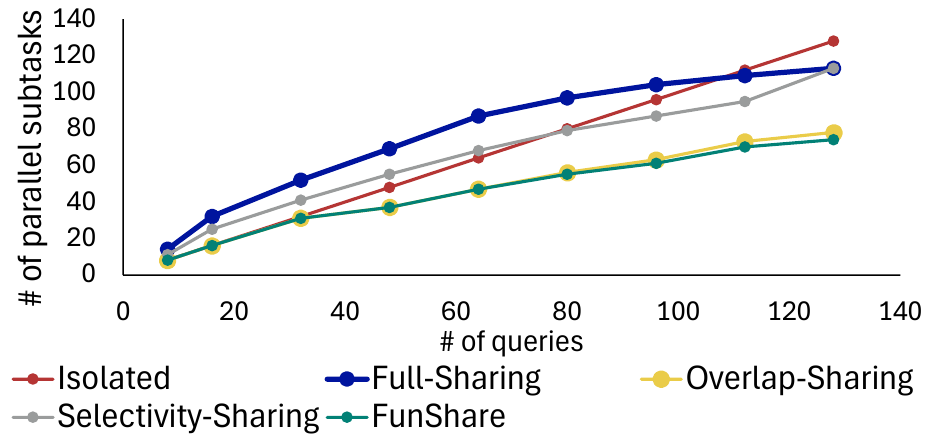}
\caption{W1 - Queries with selectivity 1\%}
\label{fig:exp:resources-joinSameSel1}
\end{subfigure}
\hfill
\begin{subfigure}[t]{0.32\linewidth}
\includegraphics[width=1\textwidth, keepaspectratio]{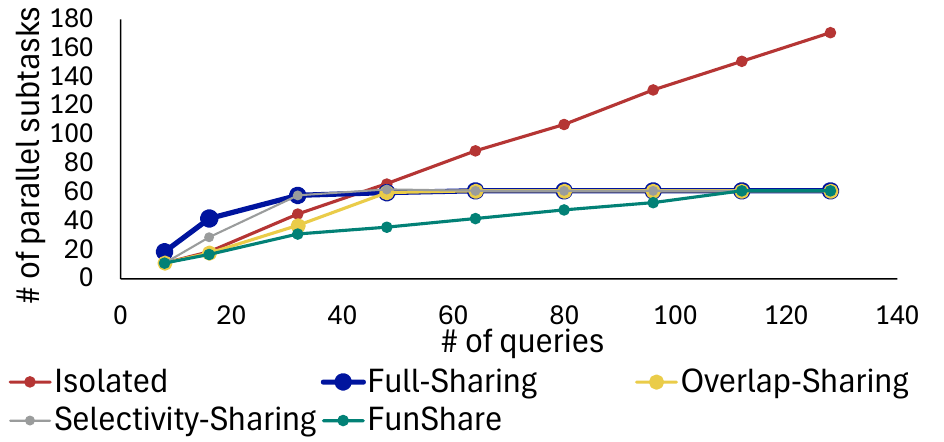}
\caption{W1 - Queries with selectivity 1--20\%}
\label{fig:exp:resources-join-dif-sel}
\end{subfigure}
\centering
\begin{subfigure}[t]{0.32\linewidth}
\includegraphics[width=1\textwidth, keepaspectratio]{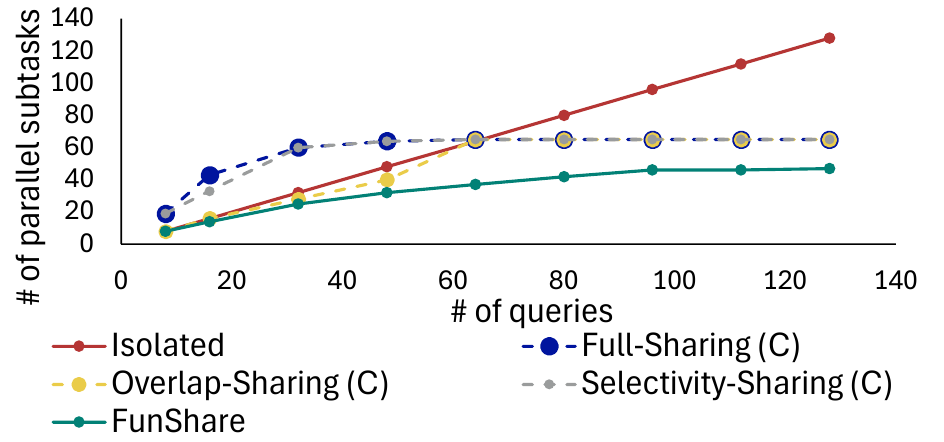}
\caption{W2 - Queries with selectivity 10\%}
\label{fig:exp:resources-downstream}
\end{subfigure}
\centering
\begin{subfigure}[t]{0.32\linewidth}
\includegraphics[width=1\textwidth, keepaspectratio]{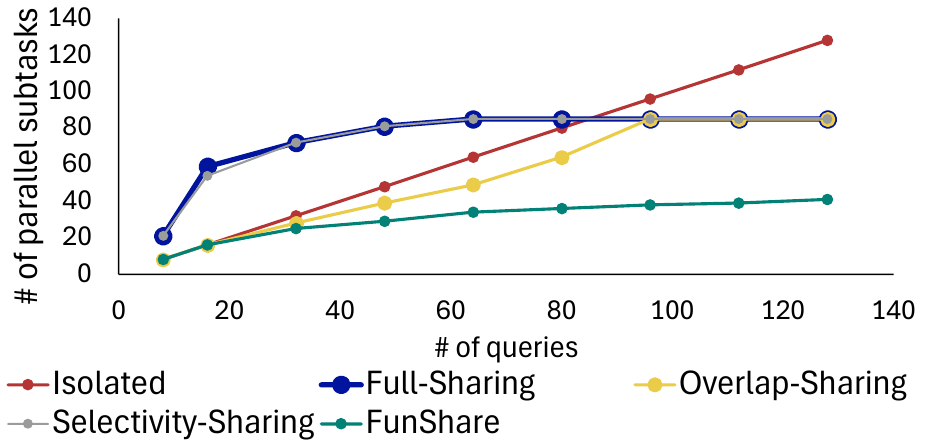}
\caption{W3 - Queries with selectivity 10\%}
\label{fig:exp:resources-vector} 
\end{subfigure}
\caption{\label{fig:exp:resources-scale} Resource usage with fixed throughput. In Fig. (d), the sharing baselines cannot sustain the input rate, so we report a constrained version of them, which never shares the downstream operators}
\vspace{-0.4cm}
\end{figure*}

\textbf{Workload} 
We evaluate three workloads based on the Nexmark Benchmark~\cite{nexmark-beam, Tucker2002NEXMarkA},
which is widely used for evaluating stream processing systems~\cite{flowKV2023, apache-beam-ibm, kalavri-three-steps, Flink-v2-release}.
To control computation overlap between queries, we apply filters with varying selectivities to the base streams across all workloads. We evaluate configurations in which queries have either equal or varying selectivities. Unless stated otherwise, each query, based on its assigned selectivity, selects a random range from the domain of the filter attribute to create variable query overlaps. 
In particular, we evaluate the following workloads and use a window of size 60 and slide 1 sec:\\
\textbf{W1: Windowed join.} Queries in this workload perform a windowed equality join between the Person and Auction streams. 
We use an N-M join to stress shared operators and create varying sharing overlaps\footnote{We add a $Person.favoriteCategory$ field and join it with $Auction.category$ to match users with their preferred auctions.}.
This workload represents a case where all queries have the same structure.  \\
\textbf{W2: Varying downstream tasks.} All queries share a join between the Auction and Bid streams, but differ in downstream operators. We select Nexmark queries 4 ($Q_{CategoryAvg}$: average price per category) and 6 ($Q_{SellerAvg}$: average price per seller)—the only Nexmark queries that share a subplan— and a synthetic query ($Q_{PriceAnomaly}$) with an expensive operation after the join: identifying pairs of auctions with similar descriptions but significantly different prices. We include an equal number of queries from each type.\\
\textbf{W3: Vector similarity.} Queries encode Auction descriptions into vector representations and identify similar auctions. This workload models streaming vector similarity joins, which are an example of a computationally intensive operation and relevant in modern ML workloads.

\textbf{Metrics} 
Our evaluation focuses on the two metrics involved in Problem~\ref{prob:funcisol} (Functional Isolation): resource consumption and throughput. 
Resource usage is measured in terms of the number of parallel tasks, which reliably approximates CPU resource consumption since we maintain a fixed ratio between the number of tasks and physical threads.

\subsection{Resource usage under fixed throughput}
\label{sec:eval-resources}

Fig.~\ref{fig:exp:resources-scale} presents the resources required to sustain the input rate for all workloads and under different selectivity configurations. The input rate is the maximum sustainable when queries run in isolation.
\ouralgorithm{}'s goal is to minimize resource usage and always use fewer resources than Isolated (Problem~\ref{prob:funcisol}).

Fig.~\ref{fig:exp:resources-scale}a, b, and c report the resource requirements for three variations of workload W1. In Fig.~\ref{fig:exp:resources-scale}a and b, all queries have the 10\% and 1\% selectivity, respectively, and in Fig.~\ref{fig:exp:resources-join-dif-sel} selectivities are drawn uniformly from $[1-20\%]$. 
Isolated execution scales linearly with the number of queries. All sharing-based approaches exhibit a plateau in resource usage as they exploit more shared computation. Full-Sharing and Selectivity-Sharing require more resources than isolation at lower concurrency, as they share computation across many queries, creating expensive global plans. If these approaches were constrained to use at most as many resources as isolated execution, they would penalize certain queries. FunShare always uses fewer resources than isolation. It adapts the query groups to the workload and can switch from isolated execution, when necessary to sustain throughput for all queries, to fine-grained groups or full-sharing when resources can be reduced without penalties. The workload of Fig.~\ref{fig:exp:resources-joinSameSel1} exhibits less computation overlap and, hence, \ouralgorithm{} saves fewer resources.

Fig.~\ref{fig:exp:resources-downstream} reports the resource requirements for workload W2 and 10\% selectivity. Full-Sharing, Overlap-Sharing, and Selectivity-Sharing create plans that share the unscalable $Q_{PriceAnomaly}$ and fail to sustain the input rate. Therefore, we report a constrained version of them, labeled (C), which never shares the downstream operators. \ouralgorithm{} discovers the most resource-efficient plan and never shares downstream operators when they slow down the group's shared plan. 

Finally, Fig.~\ref{fig:exp:resources-vector} corresponds to W3, the most compute-intensive workload. At low concurrency, \ouralgorithm{} employs fewer sharing opportunities than in W2 and W3, as the extra work performed by shared plans outweighs the benefits of reducing redundant computation. At higher concurrency, it achieves greater resource savings than in other workloads.

While Overlap-Sharing incurs no penalties here, it may degrade performance when workloads mix queries that can and cannot sustain the input rate (see Section~\ref{sec:eval-adaptivity}).

\textbf{Takeaway: }\ouralgorithm{} needs up to $3.7\times$ less resources compared to the baselines to sustain the input rate and never uses more resources than isolated execution.

\subsection{Throughput under fixed resources}
\label{sec:eval-tput}

\begin{figure}
\centering
    \begin{subfigure}[t]{0.7\linewidth}
    \includegraphics[width=1\textwidth, keepaspectratio]{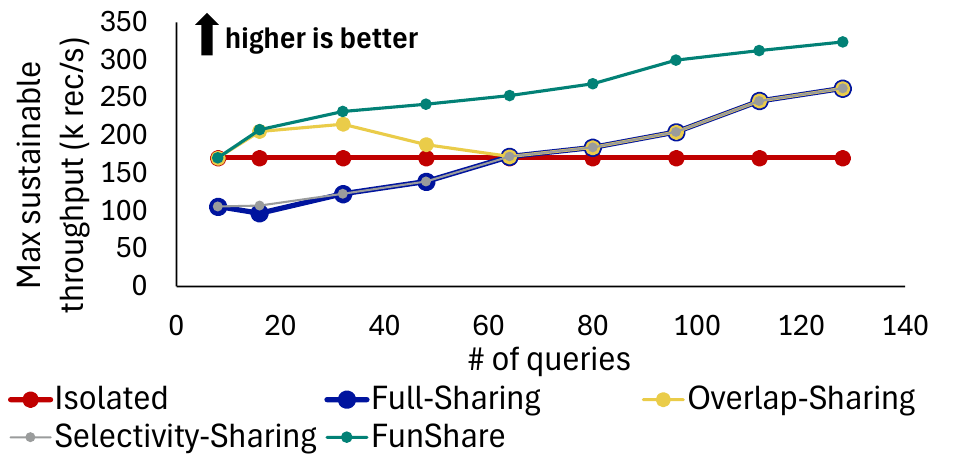}
    \caption{W1 -- queries with selectivity 10\%}
    \label{fig:exp:joinTputSameSel10} 
    \end{subfigure}
    \hfill
    \begin{subfigure}[t]{0.7\linewidth}
    \includegraphics[width=\textwidth, keepaspectratio]{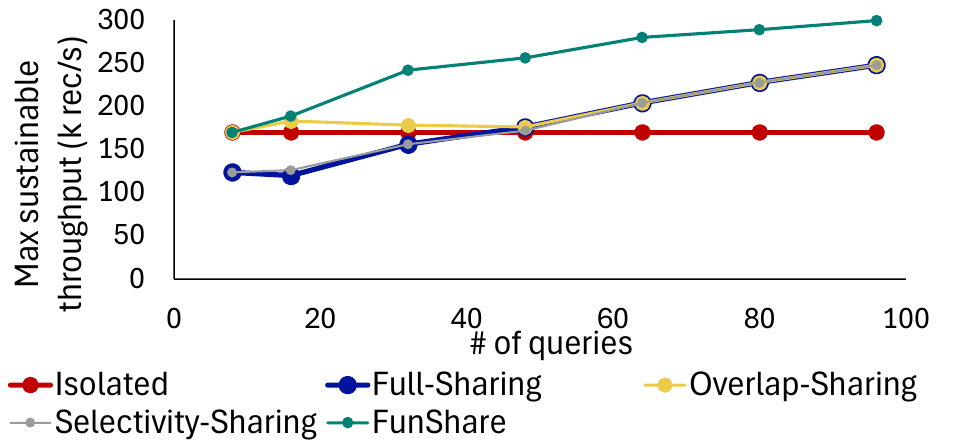}
    \caption{W1 -- queries with selectivity 1--20\%}
    \label{fig:exp:joinTputDifSel} 
    \end{subfigure}

    \caption{Max sustainable throughput under fixed resources}
    \label{fig:exp:scalability-tput}
    \vspace{-0.4cm}
\end{figure}

Fig.~\ref{fig:exp:scalability-tput} reports the maximum sustainable throughput achieved when using as many resources as isolated execution. \ouralgorithm{} aims to achieve at least the same throughput as isolated execution (Problem~\ref{prob:funcisol}). 
We evaluate equal (10\%) and variable (1–20\%) selectivities (Fig. \ref{fig:exp:joinTputSameSel10}, \ref{fig:exp:joinTputDifSel}).
In Fig.~\ref{fig:exp:joinTputDifSel}, we scale up to 96 queries, as higher concurrency requires resources beyond the cluster's capacity for isolated execution.
The throughput for isolated execution is stable. Full-Sharing and Selectivity-Sharing sustain lower throughput than isolated execution for low concurrency; this means they would penalize some queries if the input rate is below 170k records/second and the number of queries below 64 and 48, respectively.
The remaining workloads are omitted for space as they exhibit the same trends: throughput is lower than Isolated when resource costs in Section~\ref{sec:eval-resources} were higher, and when concurrency increases, shared plans leverage more resources and increase throughput.

\textbf{Takeaway:} \ouralgorithm{} never penalizes any query compared to isolated execution and given a fixed amount of resources outperforms all the baselines by up to $1.5-2.1\times$.

\subsection{Adaptivity Experiments}
\label{sec:eval-adaptivity}

\begin{figure}
    \centering
    \includegraphics[width=0.72\linewidth]{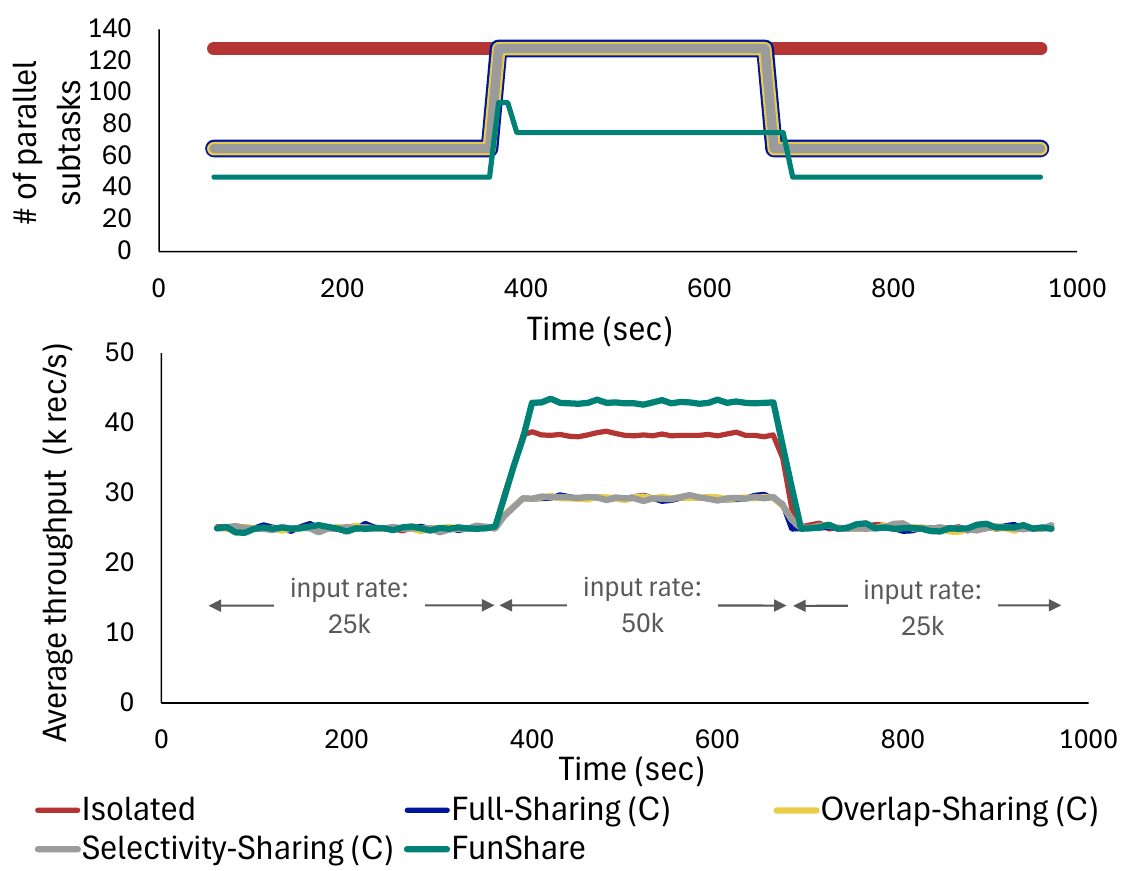}
    \caption{W2 -- Adaptivity to input rate shifts. $Q_{PriceAnomaly}$ queries cannot sustain the increase in input rate}
    \label{fig:exp:adaptivityIR}
\end{figure}

Next, we evaluate the performance of each approach under dynamic workloads by considering two scenarios: variations in the input rate and shifts in data distribution. \ouralgorithm{}'s goal is to minimize resource consumption with the constraint of achieving at least the same throughput as isolated execution.

In Fig.~\ref{fig:exp:adaptivityIR}, we experiment with 128 queries with 10\% selectivity from the workload with different downstream tasks (W2) and varying input rates. This configuration simulates workload shifts where downstream bottlenecks can propagate backpressure to shared operators. Because the sharing baselines cannot sustain the input rate even with maximum resources, we use their constrained versions.
Initially, all approaches sustain the input rate and \ouralgorithm{} has the lowest resource consumption. When the input rate increases, the expensive downstream task of $Q_{PriceAnomaly}$ queries cannot keep up, and hence, these queries cannot sustain the input rate. Sharing baselines penalize queries of type $Q_{CategoryAvg}$ and $Q_{SellerAvg}$, resulting in an average throughput lower than isolated execution. \ouralgorithm{} initially splits the queries of type $Q_{CategoryAvg}$ and $Q_{SellerAvg}$ from their previous group, momentarily increasing resources (at time 360 seconds). During the next merging step, the \emph{Optimizer} adapts sharing groups based on the workload at hand and resource consumption drops (at time 390 seconds). \ouralgorithm{} adapts better than all other approaches to the spike in the workload. Next, as the input rate returns to its initial value, \ouralgorithm{} adjusts the groups again once the merging phase takes place.

In Fig.~\ref{fig:exp:adaptivityDistr}, we experiment with 32 queries from W1 and distribution shifts.
Each query’s filter is a range of random length beginning at the start of the filter attribute’s domain. At 360 seconds, the filter attribute’s distribution changes from uniform to a Zipfian with the most frequent element at the beginning of the domain; at 660 seconds, it shifts to a different Zipfian with the most frequent element in the middle of the attribute's domain. This setup allows us to experiment with three interesting cases: in the first distribution, the amount of computation varies significantly across queries, in the second one, the computation overlap among all queries is very high, and in the third, there are some queries with very high computation and others with very low. Overlap-Sharing employs a single shared plan (like Full-Sharing) because filter ranges are overlapping. 
After transitioning to the first Zipfian distribution, all sharing techniques achieve higher throughput than isolated execution using significantly fewer resources because of the very high computation overlap. When the distribution changes to the second Zipfian, although its average throughput is higher than isolated execution, Selectivity-Sharing penalizes certain queries by up to $1.3\times$.
\ouralgorithm{} dynamically adapts sharing groups, consistently using fewer resources than alternative approaches while maintaining QoS. Specifically, during the first Zipfian distribution, it converges to a full-sharing configuration, whereas under the uniform and second Zipfian, it settles on two distinct, more fine-grained partitionings of queries.

\begin{figure}
    \centering
    \includegraphics[width=0.65\linewidth]{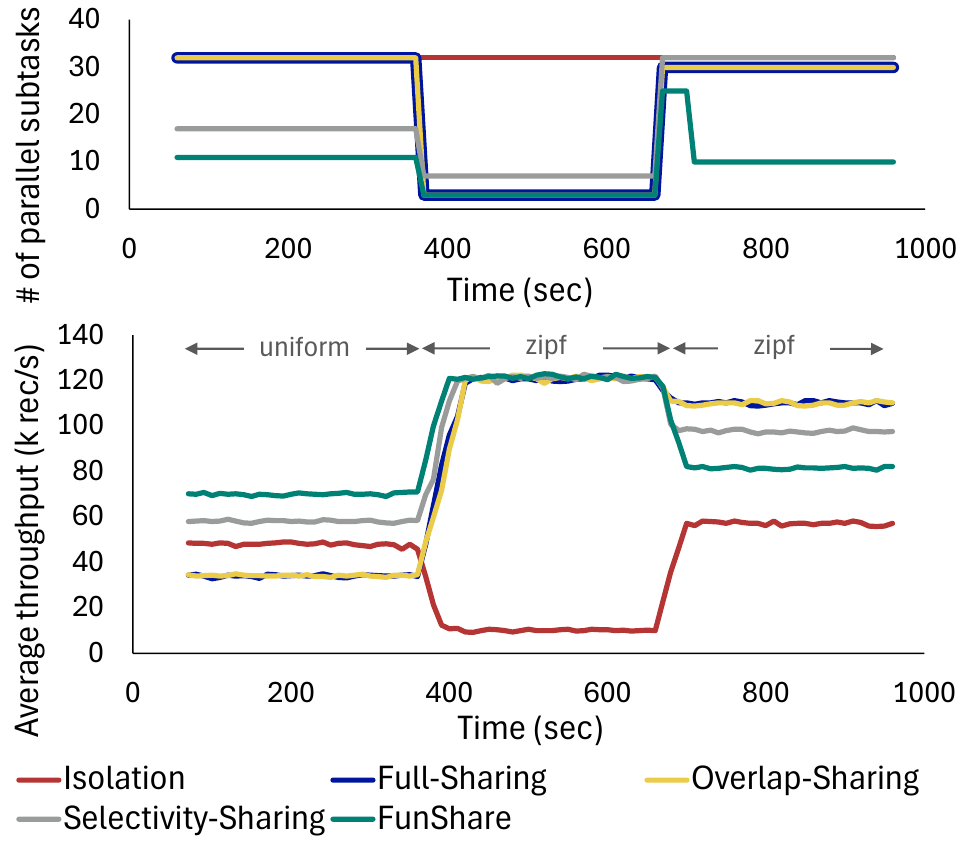}
    \caption{W1 -- Adaptivity to distribution shifts: transitioning from uniform to a Zipfian where all queries select the most frequent key, then to another Zipfian where only some do}
    \label{fig:exp:adaptivityDistr}
\end{figure}

\begin{table}[htbp]
\caption{Reconfiguration delay}
\vspace{-10px}
\label{table:reconfiguration}
\begin{center}
\begin{tabular}{|c|c|c|}
\hline
\textbf{Setup} & \textbf{\textit{1st reconfiguration(s)}} & \textbf{\textit{2nd reconfiguration(s)}} \\
\hline
 Exp. Setup for Fig. \ref{fig:exp:adaptivityIR}  & 1.746 & 1.802  \\ 
 \hline
 Exp. Setup for Fig. \ref{fig:exp:adaptivityDistr} & 1.631 & 1.694\\
 \hline
\end{tabular}
\label{tab1}
\end{center}
\end{table}
\vspace{-10px}

Table~\ref{table:reconfiguration} presents FunShare's reconfiguration delay for Fig. \ref{fig:exp:adaptivityIR} and \ref{fig:exp:adaptivityDistr}. As FunShare continues processing tuples during reconfiguration, this delay is masked and does not impact the reported throughput. The higher delay observed for Fig. \ref{fig:exp:adaptivityIR} is due to the increased number of operators in those query plans.

\textbf{Takeaway:}
\ouralgorithm{} adapts its policy following distribution shifts and can effectively switch between full sharing and fine-grained groups depending on the workload. 

\subsection{Tuning of Merge Phase}
\label{sec:eval-tuning}

\begin{figure}
\centering
\begin{subfigure}{0.57\columnwidth}
\includegraphics[width=1\textwidth, keepaspectratio]{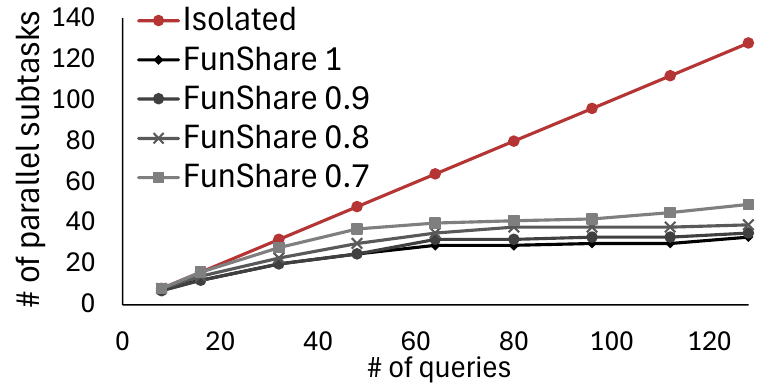}
\caption{}
\label{fig:exp:vary-merge-threshold}
\end{subfigure}
\hfill
\begin{subfigure}{0.41\columnwidth}
\includegraphics[width=1\textwidth, keepaspectratio]{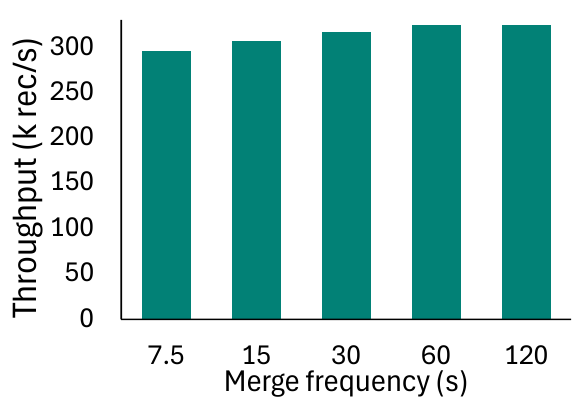}
\caption{}
\label{fig:exp:vary-merge-frequency}
\end{subfigure}
\caption{W1 (selectivity 10\%) -- (a) FunShare's resource consumption with varying merge threshold (b) FunShare's throughput with varying merge frequency (128 queries)}
\end{figure}

Fig.~\ref{fig:exp:vary-merge-threshold} and \ref{fig:exp:vary-merge-frequency} evaluate how the merge threshold and frequency affect \ouralgorithm{}'s performance. We report W1 with 10\% selectivity, but these trends generalize across workloads.
FunShare's resource usage plateaus at 64 queries regardless of the merge threshold, exposing a constantly growing benefit with the number of queries for all cases. Changing the threshold only slightly affects the level of this plateau.
In Fig.~\ref{fig:exp:vary-merge-frequency}, we use 128 queries. \ouralgorithm{} collects runtime statistics at each merge step to detect potential merges, but correctly identifies that no reconfiguration is necessary as the distribution remains stable. Increasing the merge frequency also increases the overhead of monitoring, reducing throughput. We fix the frequency to 60s as it enables quick adaptivity without incurring measurable throughput overhead. Notably, FunShare outperforms all baselines across all frequencies (Fig.~\ref{fig:exp:joinTputSameSel10}).

\textbf{Takeaway:} The merge mechanism exhibits robustness to both its tuning knobs.

\subsection{Latency and Convergence Dynamics of \ouralgorithm{}}
\label{sec:eval-overhead}

Fig.~\ref{fig:exp:latency} compares end-to-end latency for W1 (10\% selectivity). The input rate is set to the max queries sustain in isolation, and techniques are constrained to use as many resources as isolated execution. Full-Sharing and Selectivity-Sharing cannot sustain the input rate for 16 and 32 queries, resulting in unbounded latency. For $\geq 64$ queries, all sharing baselines use a single global plan, which significantly increases latency. \ouralgorithm{} leads to a significantly lower latency increase.

In Fig.~\ref{fig:exp:queue-growth}, we use the same setup as in Fig.~\ref{fig:exp:adaptivityIR} (inp. rate 50 k rec/s) and measure the queue growth rate compared to isolated execution in the presence of backpressure. In isolated execution, the queue growth rate is 35.99k rec/s for $Q_{PriceAnomaly}$ queries and 0 rec/s for the rest of the queries. While sharing baselines reduce the growth rate for $Q_{PriceAnomaly}$, they significantly increase it for $Q_{CategoryAvg}$ and $Q_{SellerAvg}$. In contrast, \ouralgorithm{} reduces it for $Q_{PriceAnomaly}$ without increasing it for any other query. This result implies that the memory and latency growth would also be slower. 

Across all workloads in Section~\ref{sec:eval}, \ouralgorithm{} converged within two merge steps. This reflects the accuracy of our load estimation and the efficiency of the merge step, which together ensure that beneficial merges are rapidly identified and applied.

\textbf{Takeaway:} By partitioning queries into fine-grained groups, \ouralgorithm{} is the only approach that bounds latency increase over isolated execution both in the presence and absence of 
backpressure. Further, it converges quickly thanks to accurate 
load estimation and effective, yet lightweight, optimization.

\begin{figure}
\centering
\begin{subfigure}{0.65\columnwidth}
\includegraphics[width=1\textwidth, keepaspectratio]{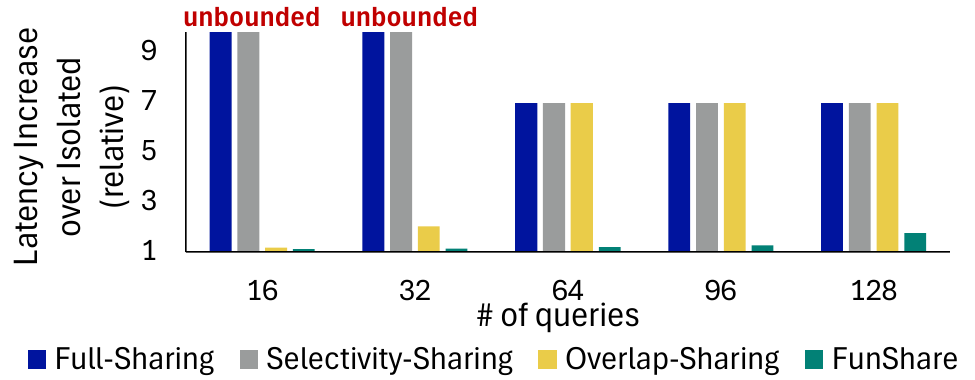}
\caption{}
\label{fig:exp:latency}
\end{subfigure}
\hfill
\begin{subfigure}{0.72\columnwidth}
\includegraphics[width=1\textwidth, keepaspectratio]{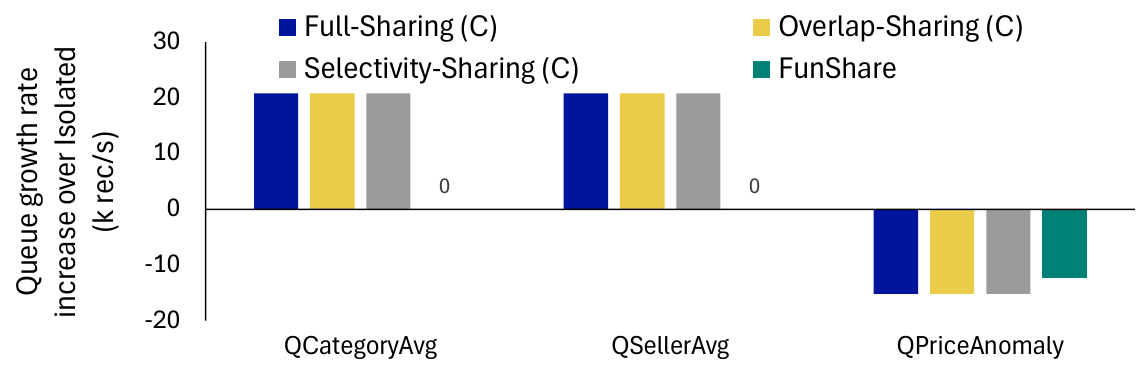}
\caption{}
\label{fig:exp:queue-growth}
\end{subfigure}
\caption{(a) Latency increase compared to isolated execution (b) Queue growth rate increase compared to isolated execution}
\end{figure}

\section{Related Work}

Work sharing is well-studied for analytical~\cite{finkelstein_common_1982, park_using_1988, kalnis_multi-query_2003, giannikis_shareddb_2012, psaroudakis_sharing_2013, roulette} and distributed batch processing \cite{nykiel_mrshare_2010, elghandour_restore_2012, wang_multi_2013}.
Previous techniques in stream processing focus on optimizing aggregate throughput~\cite{chandrasekaran_telegraphcq_2003, astream, ajoin}. SWO~\cite{swo} and AJoin~\cite{ajoin} make sharing decisions based on selectivity and overlap, respectively.
SQPR~\cite{kalyvianaki_sqpr_2011} uses sharing to solve a multi-objective problem aiming to maximize the number of queries the engine executes while minimizing resource usage and balancing load among workers. \cite{on-the-fly-sharing, traub_scotty_2021, DBLP:conf/edbt/SheinCL18} and \cite{hammad_scheduling_2003} optimize shared execution of stream aggregations and joins with varying windows and are orthogonal to our work. All aforementioned approaches do not take into account the individual query requirements.
iShare\cite{iShare} decides to share or not incremental queries based on final work constraints. However, it assumes prior knowledge of the sharing benefit and focuses on incremental processing.
\cite{pham_avoiding_2016} uses work sharing to meet the queries' latency targets, assuming queries are partitioned into priority classes and shed data if they cannot sustain the input rate.

Several prior works study the problem of autoscaling or selecting a query's parallelism~\cite{heinze_autoscaling_2014, dhalion, bilal_2017, kalavri_three_steps_2018, lombardi_elastic_2018, marangozova-martin_multi-level_2019, lian_conttune_2023, pfister_daedalus_2024} and are orthogonal to ours. First, techniques selecting the parallelism can be used to determine the per-query resources that are given as input to our optimizer.
Second, autoscaling techniques can work in synergy with \ouralgorithm{} to adjust the resources allocated to each sharing group when the workload shifts without affecting the effectiveness of groups to meet QoS. In this way, FunShare enhances the effectiveness of autoscaling techniques enabling them to operate on resource-efficient query groups.

\section{Conclusion}
Existing streaming systems achieve performance isolation by employing resource isolation, which results in infrastructure costs that grow with the number of queries. Work sharing is an effective strategy for lowering infrastructure costs by eliminating redundant computations, but penalizes individual queries, compromising Quality-of-Service (QoS). We introduce \ouralgorithm{}, a system that enables resource-efficient execution of streaming queries while preserving individual QoS. \ouralgorithm{} dynamically groups queries based on performance characteristics and continuously adapts sharing decisions to workload changes. 
Our evaluation shows that \ouralgorithm{} reduces resource consumption by $1-10.7\times$ compared to isolated execution while maintaining or improving throughput for all queries.

\section{AI-Generated Content Acknowledgement}
OpenAI’s ChatGPT was used for proofreading. We asked the system to flag potential grammatical errors, typos, and non-idiomatic phrasing in the text and to propose fixes.

\appendix

\subsection{Proofs}
\label{proofs}

\begin{theorem}
Let n be the number of queries running in \ouralgorithm{}, Algorithm~\ref{alg:split} produces a set of sharing groups within at most $n$ executions. Also, in the absence of backpressure, existing sharing groups are not affected.
\end{theorem}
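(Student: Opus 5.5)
We prove the first claim with a potential function and the second by reading off which branch of Algorithm~\ref{alg:split} touches which groups. Let $\Phi$ be the number of queries that are \emph{not} in a singleton group. Then $0 \le \Phi \le n$ holds throughout. We also fix the model in which the distribution (and hence the set $PQ$ of penalized queries) is stable while the corrective splits run, as in the convergence discussion of Section~\ref{sec:split}. We call an execution of Algorithm~\ref{alg:split} \emph{effective} if it changes the partition. The goal is to show that every effective execution strictly decreases $\Phi$, and that no execution increases it.

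The first step is a case analysis over the three branches of Algorithm~\ref{alg:split}.

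\emph{Backpressure branch (lines~\ref{alg:line:split-backpressure-start}--\ref{alg:line:split-backpressure-end}).} The queries causing backpressure are removed from group $g$. If $g$ is backpressured, at least one such query exists. We split it off into a singleton, or into a new group whose shared subplan no longer contains the backpressuring operator. Either way, at least one query leaves a non-singleton group, so $\Phi$ drops by at least one. If the split produces non-singleton residual groups, we count only the isolated offenders; this still gives a decrease of at least one. A query that has been isolated because it caused backpressure cannot be the cause of backpressure on a shared subplan again, since it no longer shares one.

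\emph{Resource-increase branch.} This branch only requests more resources, so $\Phi$ is unchanged. By Section~\ref{sec:resource-allocation}, resources are raised up to $\sum_{q\in g} Resources(q)$ at most, so the branch can fire at most once per penalty episode before the group sits at its cap. We fold this into the count by pairing it with the subsequent \texttt{else} branch. If this pairing looks too loose, we instead argue that constraint (2) of Problem~\ref{prob:funcisol} makes the resource-increase step idempotent, so it does not count as a new partitioning execution.

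\emph{Else branch (line~\ref{alg:line:split-penalized}).} Every query in $PQ \neq \emptyset$ is placed in a singleton group, so $\Phi$ decreases by $|PQ| \ge 1$. A singleton runs in isolation, and in isolation $T_g = T_q$ trivially. Hence a singleton is never penalized again, and Algorithm~\ref{alg:split} is never triggered on it.

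Since $\Phi$ starts at most $n$ and each effective execution lowers it by at least one, at most $n$ effective executions can occur. After that, every group satisfies constraint (1) of Problem~\ref{prob:funcisol}, so the resulting set of sharing groups satisfies Definition~\ref{def:func_isolation}.

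The second claim follows from locality. When the backpressure branch does not fire, Algorithm~\ref{alg:split} acts only on its input group $g$: it either changes $Resources(g)$ or moves queries of $PQ \subseteq g$ into fresh singletons. No other group's membership changes, and no other group's resources change, since the Resource Manager only lowers $g$'s own allocation to fund the singletons. Therefore all other existing sharing groups are unaffected.

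The main obstacle is the resource-increase branch, because it does not reduce $\Phi$ and so threatens the bound of $n$. The first approach will be the pairing or idempotence argument above, using the resource cap. A secondary subtlety is ensuring that a backpressure split really isolates at least one query rather than re-grouping offenders together, and I would pin that down from the description ``immediately isolates them'' in Section~\ref{sec:split}.
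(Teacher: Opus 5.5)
Your argument is sound under the same convention the paper's proof relies on, but it takes a different route.

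\textbf{How the two proofs differ.} The paper inducts on $n$. With a single group and $BQ\cup PQ\neq\emptyset$, it sends $RQ$ ($=BQ$ if nonempty, else $PQ$) to singletons and applies the hypothesis to the residual of at most $k$ queries. With $m\ge 2$ groups, it applies the hypothesis to each group and sums to $\sum_j|g_j|=k+1$. You replace this with the global potential $\Phi$, which never increases and drops by at least one per partition-changing execution. Both proofs rest on the same two facts: every split sends at least one query to a singleton, and singletons are never touched again.

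What each buys: your potential does not care how executions interleave across groups, and it needs no $m=1$ versus $m\ge 2$ case split. The induction makes per-group independence explicit, and the paper reuses that for the second claim. A sharing group with $BQ=\emptyset$ has $PQ=\emptyset$, so it is never an input, and executions on other groups do not touch it. Your locality argument supplies only the second half of this. The first half is the trigger condition you already use for singletons. Also, locality holds in all three branches, so ``absence of backpressure'' should be read as a property of the sharing group itself, not of the execution.

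\textbf{Loose end: the resource-increase branch.} The paper's proof never considers this branch; its second case goes straight from $PQ\neq\emptyset$ to singletons. So your idempotence fallback, counting only partition-changing executions, is exactly the paper's implicit convention.

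Your primary claim that the branch fires ``at most once per penalty episode'' does not follow from the cap. Raising resources ``up to'' the sum of isolated allocations allows incremental raises. To count these executions, you need two assumptions: the raise jumps straight to the cap, and each split leaves the residual at its own cap. Then a group of initial size $s\ge 2$ undergoes at most $s-1$ splits, since each split shrinks the residual by at least one query and nothing remains to split once the residual has at most one query. That leaves room for one raise within $s$ executions, and summing over groups gives at most $n$.

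\textbf{Loose end: the backpressure branch.} Drop the alternative where backpressure offenders go into ``a new group.'' Regrouping two offenders out of a group of four leaves $\Phi$ unchanged, and ``counting only the isolated offenders'' does not repair this. Commit to the singleton reading ($RQ=BQ$), as the paper does.
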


\begin{proof}
We prove this theorem by induction:
\begin{itemize}
    \item n = 1: The initial query group consists of one query, so it is by definition a sharing group.
    \item Suppose that the property holds for $n \leq k$; we prove that it also holds for $n = k + 1$:

    Let $m$ be the number of groups across which the queries are partitioned. Furthermore, let $PQ$ be the penalized queries and $BQ$ the queries experiencing high backpressure. There are three cases:
    \begin{enumerate}
        \item $m = 1$ and $PQ \cup BQ = \emptyset$: then by definition, the group is already a sharing group, and it remains as is, i.e., unaffected by the algorithm.
        \item $m = 1$ and $BQ \cup PQ \neq \emptyset$: if  $BQ \neq \emptyset$, Algorithm~\ref{alg:split} forms singleton groups for queries in $BQ$, else for queries in $PQ$. Either way, let $RQ = BQ, if BQ \neq \emptyset, and PQ, otherwise$. The algorithm forms singleton groups for queries in $RQ$ which are sharing groups. Furthermore, $|g_1 \setminus RQ| \leq k$ so running Algorithm~\ref{alg:split} for at most $k$ more iterations will result in sharing groups for all queries.
        \item $m \geq 2$: Algorithm~\ref{alg:split} runs independently across each group. Each $|g_j| \leq k$, so running the algorithm for each group independently leads to sharing groups within at most $\sum_{j=1}^{m}|g_j| = k + 1$ steps. If any group is already a sharing group and $BQ = \emptyset$, it is not affected.
    \end{enumerate}
\end{itemize}
\end{proof}

\begin{theorem}
Let $m$ be the number of groups in \ouralgorithm{}. Assuming an accurate model for $Load(g)$, linear scalability for operators and a merging threshold that is at most 1, Algorithm~\ref{alg:merge} has the following loop invariant: if the $m$ groups are all sharing groups and queries with backpressure are isolated at the start of the loop, they remain so at the end of the loop as well.
\end{theorem}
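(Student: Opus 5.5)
We argue about a single iteration of the outer while loop of Algorithm~\ref{alg:merge}. If that iteration performs no merge, the set $G$ is left untouched and the invariant holds trivially. Otherwise exactly one pair $(g_i,g_j)$ is replaced by $g_i\cup g_j$ and every other group is unchanged, so the other groups stay sharing groups and stay free of backpressure. It therefore suffices to prove two things about the merged group: it is a sharing group, meaning every $q\in g_i\cup g_j$ has $T_{g_i\cup g_j}\ge T_q$; and merging does not put any backpressured query into a non-singleton group.

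\emph{Backpressure.} The merge was admitted only because the check at line~\ref{alg:line:merge-backpressure-check} passed, so no downstream backpressure propagation is expected. By hypothesis, any query that is already backpressured sits in its own singleton group, and the shared operators of that group are backpressured by its downstream subplan. The check therefore rejects every pair containing such a group, as the paper describes. Hence no backpressured query ends up in a merged group, and the isolation part of the invariant is preserved. If the merge creates no new backpressure, the shared subplan of $g_i\cup g_j$ carries no additional load from a slow downstream, so its throughput is governed only by load and resources.

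\emph{Throughput.} Let $c=\max(\mathit{GroupingCost}(g_i,g_j),\mathit{GroupingCost}(g_j,g_i))$. The merge happened, so $c<MT\le 1$. Rewrite $\mathit{GroupingCost}(g_i,g_j)<1$ as
\[
\frac{\mathrm{Load}(g_i\cup g_j)-\mathrm{Load}(g_j)}{\mathrm{Load}(g_i\cup g_j)} < \frac{Resources(g_i)+\mathit{IdleResources}(g_j)}{Resources(g_i)+Resources(g_j)}.
\]
Taking complements gives
\[
\frac{\mathrm{Load}(g_j)}{\mathrm{Load}(g_i\cup g_j)} > \frac{Resources(g_j)-\mathit{IdleResources}(g_j)}{Resources(g_i)+Resources(g_j)}.
\]
By linear scalability, throughput equals (resources)/(per-tuple load), up to a common constant. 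Define the busy resources of $g_j$ as $Resources(g_j)-\mathit{IdleResources}(g_j)$, and its used throughput as $T_{g_j}=(Resources(g_j)-\mathit{IdleResources}(g_j))/\mathrm{Load}(g_j)$. Rearranging the inequality then gives
\[
T_{g_i\cup g_j} = \frac{Resources(g_i)+Resources(g_j)}{\mathrm{Load}(g_i\cup g_j)} \ge T_{g_j}.
\]
This uses that the Resource Manager provisions at least $Resources(g_i)+Resources(g_j)$, or a minimal amount that keeps the cost below $MT$. That amount is still feasible for this inequality because the constraint is exactly the cost condition. The accurate-$\mathrm{Load}$ assumption lets the estimated quantities stand in for the true ones. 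Since $g_j$ was a sharing group, $T_{g_j}\ge T_q$ for all $q\in g_j$, so $T_{g_i\cup g_j}\ge T_q$ for those queries. The symmetric inequality $\mathit{GroupingCost}(g_j,g_i)<1$, guaranteed by the $\max$, handles the queries of $g_i$. Constraint (2) of Problem~\ref{prob:funcisol} holds because the provisioned resources never exceed the sum of the two groups' resources.

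\emph{Main obstacle.} The hardest step is the translation from cost to throughput. We must state precisely what linear scalability means: that the sustained rate is proportional to (non-idle resources)/load. We must also decide whether the group's throughput is capped at the input rate $D(t)$, in which case the idle resources account for exactly the slack. We would first try to define $T_g=\min(D(t),\,\text{busy resources}/\mathrm{Load}(g))$ and check the inequality in both regimes. When $D(t)$ is the binding term, idle resources are positive and the merged group still reaches $D(t)$ by the same rearrangement.
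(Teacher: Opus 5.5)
Your proof is correct and is essentially the paper's. Both take $T_g=(Resources(g)-\mathit{IdleResources}(g))/\mathrm{Load}(g)$ from linear scalability, rearrange $\mathit{GroupingCost}(g_i,g_j)\le 1$ into $(Resources(g_i)+Resources(g_j))/\mathrm{Load}(g_i\cup g_j)\ge T_{g_j}$, and get the bound for $g_i$ from the $\max$; you also spell out the untouched groups and the backpressure clause, which the paper leaves implicit.

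One caution concerns your aside that a smaller Resource-Manager allocation also works. The paper's proof simply assumes the merged group receives the full sum $Resources(g_i)+Resources(g_j)$. A smaller allocation would have to keep both directional costs below the threshold, and the paper's rule (choosing $i^*$ by maximizing $Resources^*(M_{-i})$ alone) does not by itself ensure this, so rely on the full sum as the paper does.
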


\textit{Note that an accurate model for $Load(g)$ and linear scalability are not always possible. For this reason, we set lower, more pessimistic values for the merging threshold, in order to compensate for estimation errors in our model.}

\begin{proof}
In each iteration of Algorithm~\ref{alg:merge}'s loop, since $MERGING\_THRESHOLD \le 1$, the algorithm chooses a pair of groups such that 

$$max(\mathit{GroupingCost}(g_i, g_j), \mathit{GroupingCost}(g_j, g_i)) \le 1$$
We note that, due to the assumption of linear scalability, each group's throughput is
$$T_g = \frac{Resources(g) - IdleResources(g)}{Load(g)}$$

and the maximum sustainable throughput $T^*$ for $g_i \cup g_j$ is

$$ T^* = \frac{Resources(g_i) + Resources(g_j)}{Load(g_i \cup g_j)}$$

We have 
$$\mathit{GroupingCost}(g_i, g_j) \le 1 \implies$$

$$\frac{\frac{Load(g_i \cup g_j) - Load(g_j)}{Load(g_i \cup g_j)}}{\frac{Resources(g_i) + IdleResources(g_j)}{Resources(g_i) + Resources(g_j)}} \le 1 \implies$$

\begin{multline*}
\frac{Load(g_i \cup g_j) - Load(g_j)}{Load(g_i \cup g_j)} \le \\ \frac{Resources(g_i) + IdleResources(g_j)}{Resources(g_i) + Resources(g_j)} \implies
\end{multline*}

\begin{multline*}1 - \frac{Load(g_j)}{Load(g_i \cup g_j)} \le  \frac{Resources(g_i) + IdleResources(g_j)}{Resources(g_i) + Resources(g_j)} \implies
\end{multline*}

\begin{multline*}
\frac{Load(g_j)}{Load(g_i \cup g_j)} \ge  1 - \frac{Resources(g_i) + IdleResources(g_j)}{Resources(g_i) + Resources(g_j)} \implies
\end{multline*}

$$\frac{Load(g_j)}{Load(g_i \cup g_j)} \ge \frac{Resources(g_j) - IdleResources(g_j)}{Resources(g_i) + Resources(g_j)} \implies$$

\begin{multline*}
\frac{Resources(g_i) + Resources(g_j)}{Load(g_i \cup g_j)} \ge \\ \frac{Resources(g_j) - IdleResources(g_j)}{Load(g_j)} \implies
\end{multline*}

$$T^* \ge T_{g_j}$$

Similarly, we can prove that $T^* \geq T_{g_i}$ from $\mathit{GroupingCost}(g_j, g_i) \le 1$

From these equations, we conclude that the new sharing group can sustain a higher throughput, hence maintain functional isolation. Therefore, the resulting group set from the merging step comprises sharing groups.
\end{proof}

The implication of the two theorems is as follows: the first time Algorithm~\ref{alg:split} runs, it produces sharing groups. Then, Algorithm~\ref{alg:merge} performs merging iterations until it reaches a fixed point. From that point on, neither algorithm changes the sharing groups, so \ouralgorithm{} has converged into a grouping. Note that convergence assumes that the data distribution doesn't change during this process; if so, repartitioning the queries is not only necessary but also desirable.

\bibliographystyle{IEEEtran}
\bibliography{bibliography}

\end{document}